\pdfoutput=1
\documentclass[11pt]{article}

\newcommand{\PaperTitle}{Partial Information Decomposition When Some Sources Are Known}
\newcommand{\PaperSubtitle}{A Source Chain Rule and a No-Go Theorem}
\newcommand{\ShortTitle}{Partial Information Decomposition When Some Sources Are Known}
\newcommand{\AuthorName}{Aaron J. Gutknecht}
\newcommand{\AuthorAffiliation}{\href{https://xorlabs.org}{XORLabs}, London, United Kingdom}
\newcommand{\AuthorEmail}{contact@aarongutknecht.com}
\newcommand{\PaperDate}{}
\newcommand{\Keywords}{partial information decomposition; redundant information;
  conditioning; chain rule; local positivity; no-go theorem}

\usepackage{agpreprint}
\usepackage{placeins}

\newcommand{\B}{\mathcal{B}}
\newcommand{\PS}{\mathcal{P}}
\newcommand{\bfa}{\mathbf{a}}
\newcommand{\bfb}{\mathbf{b}}
\newcommand{\bfc}{\mathbf{c}}
\newcommand{\bfS}{\mathbf{S}}
\newcommand{\Icap}{I_\cap}
\newcommand{\Imin}{I_{\min}}
\newcommand{\Isx}{I_\cap^{\mathrm{sx}}}
\newcommand{\Rmin}{R_{\min}}
\newcommand{\Iwedge}{I_\cap^{\wedge}}
\newcommand{\ctx}{\mathbin{\Vert}}
\newcommand{\pp}[1]{\textbf{#1}}

\definecolor{RedC}{HTML}{0072B2}
\definecolor{UnqC}{HTML}{009E73}
\definecolor{SynC}{HTML}{D55E00}
\definecolor{CtxC}{RGB}{128,134,141}

\begin{document}

\makepapertitle

\begin{abstract}
Partial information decomposition (PID) splits the information that several
sources carry about a target into atoms, distinguished by which collections
of sources give access to them. We ask how such a decomposition should
behave when some of the sources are already known. The access structure of
the atoms suggests one answer: discard the atoms that the known sources
provide and regroup the rest according to which further sources unlock
them. The classical construction of conditional information suggests
another: recompute the decomposition under the conditional distributions
and average. We show that the two answers agree, atom by atom, exactly when
redundant information obeys a source chain rule, an analogue of the chain
rule of mutual information: when the same sources are added to several
collections, the enlarged collections share everything the added sources
provide and, beyond that, exactly what the original collections share once
the added sources are known. The rule divides established measures of
redundant information, so that adopting or rejecting it genuinely narrows
the choice of measure. Our main result is a no-go theorem:
for three or more sources, the source chain rule and local positivity, the
requirement that all atoms be nonnegative, cannot hold together in any
PID.
\end{abstract}

\makekeywords

\section{Introduction}
\label{sec:introduction}

Partial information decomposition (PID) asks how the information that
several source variables carry about a target is distributed among them
\citep{williams2010nonnegative}. Some of it may be available from each of
several sources (redundant information), some from a single source only
(unique information), and some only from several sources together
(synergistic information). In the mereological formulation of PID
\citep{gutknecht2021bits,gutknecht2025babel}, each such component, or
\emph{information atom}, is characterized by its access structure: the
collections of sources from which it can be obtained. This makes PID a
fine-grained language for multivariate dependence, used for instance in
neuroscience \citep{wibral2017partial,luppi2022synergistic} and in machine
learning \citep{wollstadt2023rigorous,liang2023quantifying,makkeh2025general}.

There is one atom for each possible
access structure, but this does not tell us how much information each atom
contains. If the atoms are to be the pieces of the
information that the sources provide, then the information that any
collection of sources provides should be the sum of the atoms accessible from
it. These \textit{consistency equations} tie the atoms to Shannon's mutual
information but do not determine them: for three sources, there are eighteen
atoms but only seven equations. The many proposed measures
\citep{williams2010nonnegative,Harder2013,bertschinger2014quantifying,Griffith2014_common_randomness,ince2017measuring,finn2018pointwise,rosas2020operational,makkeh2021introducing,Kolchinsky2022,gomes2024measure}
differ in which further principles they satisfy
\citep{liardi2026mathematical}. Several combinations of natural-looking
principles are known to be incompatible
\citep{Bertschinger2013_shared_info,rauh2014reconsidering,finn2018pointwise,matthias2025novel}.
Results of this kind are useful because they show which choices a measure
cannot avoid.

This paper concerns one such choice, and it is about conditioning. Suppose
that some of the sources, the \emph{context}, are already known. How should
the information that the other sources provide beyond the context be
decomposed? The atoms themselves suggest an answer. An atom that the context
does not provide becomes available when further sources are learned exactly
if it is available from these sources and the context together. The unconditional PID therefore already contains a decomposition
of the information beyond the context: discard the atoms that the context
provides, and group the others by which further sources unlock them. We call
this the \emph{contextual decomposition}. Classical information theory
suggests a different answer. As for conditional mutual information, one can
compute the PID under the conditional distribution given each value of the
context and average \citep{Fano1961}; we call this the \emph{conditional
decomposition}. Both decompose the same conditional mutual information.
Should they agree?

We show that they agree, atom by atom and for every context, if and only if
redundant information obeys a \emph{source chain rule}
(\Cref{thm:scr-atoms}). The rule concerns collections of sources that all
contain the context. What such collections share consists of all the
information in the context and, beyond that, exactly what their remaining
parts share once the context is known. For a single collection, this is the
chain rule of mutual information, which splits the information of the
collection into the information in the context and the conditional
information beyond it. The source chain rule is a genuine requirement:
established measures fall on both sides of it. It holds for three
conceptually different measures, the shared-exclusion redundancy $\Isx$
\citep{makkeh2021introducing}, the specificity--ambiguity measure $\Rmin$ of
Finn and Lizier \citep{finn2018pointwise}, and the common-information
redundancy $\Iwedge$ of Griffith et
al.~\citep{Griffith2014_common_randomness}, whereas the original measure
$\Imin$ of Williams and Beer \citep{williams2010nonnegative} violates it.
The source chain rule should be distinguished from the target chain rule
\citep{Bertschinger2013_shared_info,finn2018pointwise,makkeh2021introducing,matthias2025novel},
which conditions on part of a composite target rather than on part of the
sources. The two rules are logically independent, although under three
elementary properties the target chain rule implies the source chain rule
(\Cref{sec:tcr}).

Our main result is that the source chain rule comes at a price. For every
number of sources $n\geq3$, it is incompatible with \emph{local positivity},
the requirement that all atoms be nonnegative (\Cref{thm:incompatibility}).
The bound is sharp, since for $n\leq2$ every PID satisfies the source chain
rule. The counterexample is elementary. A context that carries no
information about the target decides which of two other sources carries it:
once the context is known, one of them is a copy of the target and the other
is constant. The source chain rule then requires that the two sources share
no information
given the context, whereas local positivity forces them to do so. The measures that satisfy the
source chain rule violate local positivity at different points of this
argument, which we make explicit.

The rest of the paper proceeds as follows. After recalling the framework
(\Cref{sec:background}), we introduce the two decompositions and show that
the source chain rule is equivalent to their agreement (\Cref{sec:scr}). We
then prove the incompatibility with local positivity
(\Cref{sec:incompatibility}), present the measures that satisfy the rule and
show where they violate local positivity (\Cref{sec:measures}), relate the
source chain rule to the target chain rule (\Cref{sec:tcr}), and discuss the
trade-off it exposes (\Cref{sec:discussion}).

\section{Background}
\label{sec:background}

We use the formulation of PID in terms of parthood distributions
\citep{gutknecht2021bits,gutknecht2025babel} and the notation of
\citep{matthias2025novel}; readers familiar with them can skip to
\Cref{sec:scr}.

Throughout, the \emph{sources} $\bfS=(S_1,\ldots,S_n)$ and the \emph{target}
$T$ are jointly distributed random variables, and logarithms are to base
two. We write
$[n]:=\{1,\ldots,n\}$, identify a collection of sources with its index set
$\bfa\subseteq[n]$, and write $S_{\bfa}:=(S_i)_{i\in\bfa}$, so that
$S_\emptyset$ is constant. Accordingly, $I(\bfa;T):=I(S_{\bfa};T)$ and
$I(\bfa;T\mid\bfc):=I(S_{\bfa};T\mid S_{\bfc})$.

\subsection{Partial information decomposition}

A PID splits the mutual information terms $I(\bfa;T)$ into atoms. What
distinguishes the atoms are their parthood relations: for every collection
$\bfa$, an atom either is or is not part of the information that $S_{\bfa}$
provides about $T$. A parthood distribution records these relations.

\begin{definition}[Parthood distribution]
\label{def:parthood}
A \emph{parthood distribution} for $n$ sources is a function
$f:\PS([n])\to\{0,1\}$ such that
\begin{enumerate}[label=\textnormal{(B\arabic*)},leftmargin=2.8em,itemsep=0.15em]
  \item $f(\emptyset)=0$,
  \item $f([n])=1$,
  \item if $\bfa\subseteq\bfb$ and $f(\bfa)=1$, then $f(\bfb)=1$.
\end{enumerate}
The set of parthood distributions for $n$ sources is denoted by $\B_n$.
\end{definition}

The value $f(\bfa)=1$ means that the atom $\Pi(f;T)$ is part of $I(\bfa;T)$;
we say that the atom is \emph{accessible} from $S_{\bfa}$. Where no
confusion can arise, we refer to the atom $\Pi(f;T)$ simply as the atom $f$.
Condition (B3)
states that what is accessible from a collection remains accessible from
every larger one. A PID assigns values to the atoms such that they add up
to the mutual information terms.

\begin{definition}[Partial information decomposition]
\label{def:pid}
A \emph{partial information decomposition} is a family of real-valued
functionals $\{\Pi(f;T)\}_{f\in\B_n}$ of the joint distribution of
$(\bfS,T)$ satisfying the \emph{consistency equations}
\begin{equation}
I(\bfa;T)=\sum_{\substack{f\in\B_n\\ f(\bfa)=1}}\Pi(f;T)
\qquad\text{for every }\bfa\subseteq[n].
\label{eq:consistency}
\end{equation}
\end{definition}

We take the PID, or the measure inducing it, to be defined for
any number of sources, as for all measures discussed in this paper, so that
it can also be applied to a subfamily of the sources.

Parthood distributions correspond one-to-one to \emph{antichains},
collections of nonempty subsets of $[n]$ none of which contains another: $f$
corresponds to the antichain of minimal collections $\bfa$ with $f(\bfa)=1$
\citep{gutknecht2021bits}. Where convenient, we also label atoms by these antichains, written
without braces and with elements separated by centered dots. For example,
$1{\cdot}23$ denotes $\{\{1\},\{2,3\}\}$, the atom that is accessible both
from $S_1$ alone and from $(S_2,S_3)$ jointly. For two sources, the four
atoms $1{\cdot}2$, $1$, $2$ and $12$ are the redundant, the two unique, and
the synergistic information.

\subsection{Redundant information}

The consistency equations do not determine the atoms: for three sources
there are eighteen atoms but only seven equations. A PID is therefore
specified through a PID-inducing measure \citep{gutknecht2025babel}, most
commonly a measure of redundant information \citep{williams2010nonnegative},
the information that each of several collections provides.

\begin{definition}[Redundant information]
\label{def:redundancy}
The \emph{redundant information} of source collections
$\bfa_1,\ldots,\bfa_m\subseteq[n]$ in a PID is
\begin{equation}
\Icap(\bfa_1,\ldots,\bfa_m;T)
:=\sum_{\substack{f\in\B_n\\ f(\bfa_1)=\cdots=f(\bfa_m)=1}}\Pi(f;T).
\label{eq:redundancy}
\end{equation}
\end{definition}

By the consistency equations, redundancy satisfies \emph{self-redundancy},
$\Icap(\bfa;T)=I(\bfa;T)$, and by (B3) it is unchanged when an argument that
contains another one is removed. It thus depends only on the antichain of
minimal arguments, and we label it like the atoms, writing for instance
$\Icap(13{\cdot}23;T)$ for $\Icap(\{1,3\},\{2,3\};T)$. Conversely, redundancy determines the
atoms. This is the basis of the original construction of Williams and Beer,
who obtain the atoms from the redundancies by Möbius inversion over their
redundancy lattice \citep{williams2010nonnegative,gutknecht2025babel}. We use
it in the following form.

\begin{lemma}[Möbius inversion]
\label{lem:mobius}
A family of real numbers $\{x(f)\}_{f\in\B_n}$ is uniquely determined by the
sums $\sum_{f:\,f(\bfa_1)=\cdots=f(\bfa_m)=1}x(f)$ over all choices of
nonempty collections $\bfa_1,\ldots,\bfa_m$.
\end{lemma}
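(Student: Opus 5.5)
The plan is to treat the statement as linear algebra over the finite set $\B_n$ and show that the linear map $x\mapsto\bigl(\sum_{f:\,f(\bfa_1)=\cdots=f(\bfa_m)=1}x(f)\bigr)_{\bfa_1,\ldots,\bfa_m}$ is injective. It suffices to exhibit, for each $g\in\B_n$, a combination of these sums that isolates $x(g)$. The natural choice of sums are those indexed by parthood distributions themselves. For $g\in\B_n$, let $\alpha_g$ be its antichain of minimal sets, and write $s(g):=\sum_{f:\,f(\bfa)=1\ \forall\bfa\in\alpha_g}x(f)$. By (B3), the condition $f(\bfa)=1$ for all $\bfa\in\alpha_g$ is equivalent to $f\geq g$ pointwise, that is, to $f(\bfb)=1$ whenever $g(\bfb)=1$. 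Hence
\[
s(g)=\sum_{f\in\B_n,\ f\geq g}x(f).
\]
Every element of $\alpha_g$ is nonempty by (B1), so $s(g)$ is among the given sums.

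Next I would order $\B_n$ pointwise. This gives a finite partial order, and $s$ is the ``upper-set sum'' transform of $x$ over it. Möbius inversion on a finite poset then yields $x(g)=\sum_{f\geq g}\mu(g,f)\,s(f)$, where $\mu$ is the Möbius function of $(\B_n,\leq)$. Concretely, I would argue by downward induction along a linear extension. For a maximal $g$, $x(g)=s(g)$. In general, $x(g)=s(g)-\sum_{f>g}x(f)$, and every term on the right is already determined by the induction hypothesis. Uniqueness then follows: if two families have the same sums, their difference has all $s$ equal to $0$, and the induction forces it to vanish. One could also note that the sums with arbitrary $\bfa_1,\ldots,\bfa_m$ are redundant given these, but that is not needed.

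The only point needing care is the equivalence between ``$f(\bfa)=1$ on the minimal sets of $g$'' and ``$f\geq g$''. One direction is immediate. For the other, every $\bfb$ with $g(\bfb)=1$ contains some minimal $\bfa\in\alpha_g$, and (B3) for $f$ then gives $f(\bfb)=1$. With that in hand the triangular structure of the system is clear, and the rest is routine.
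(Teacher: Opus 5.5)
Your proposal is correct and follows essentially the same route as the paper: take the sums indexed by the minimal sets of each parthood distribution, use (B3) to identify them with the upper-set sums $\sum_{f\geq g}x(f)$, and recover $x$ by downward recursion over the finite poset $\B_n$. Your checks that these minimal sets are nonempty by (B1), and your use of a linear extension for the induction, only make the same argument more explicit.
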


\begin{proof}
For a parthood distribution $f$, let $\bfa_1,\ldots,\bfa_m$ be the minimal
collections with $f(\bfa)=1$. By (B3), a parthood distribution $g$ satisfies
$g(\bfa_1)=\cdots=g(\bfa_m)=1$ exactly if $g\geq f$ pointwise, so the
corresponding sum is $\sum_{g\geq f}x(g)$. Since $\B_n$ is finite, these
equations determine $x$ recursively, starting from the pointwise largest
parthood distribution: $x(f)$ is the sum belonging to $f$ minus
$\sum_{g>f}x(g)$. This recursion is the Möbius inversion of
\citep{williams2010nonnegative}.
\end{proof}

In particular, every assignment of redundancy values to antichains that
satisfies self-redundancy induces a unique PID, and we speak of the PID
induced by a redundancy measure.

\section{Conditioning a PID on some of its sources}
\label{sec:scr}

Suppose that a collection $\bfc\subseteq[n]$ of sources, the
\emph{context}, is already known. The other sources then provide the
information $I([n];T\mid\bfc)$ beyond the context, and we would like to
decompose it into atoms. Since only the remaining sources can add
information, this should be a decomposition among the remaining sources: for
three sources and the context $S_3$, for instance, into what $S_1$ and $S_2$
provide redundantly, uniquely and synergistically beyond $S_3$. This section
describes two natural ways of doing so: (1)~reading the decomposition off the
PID of all sources, and (2)~recomputing it under the conditional
distributions. Asking when the two agree leads to the source chain rule.
\Cref{fig:two-decompositions} illustrates both for three sources.

Both decompositions are indexed by the parthood distributions of the
$n-|\bfc|$ remaining sources, which we denote by $\B_{n-|\bfc|}$: functions $g$
on the collections $\bfa\subseteq[n]\setminus\bfc$ that satisfy (B1)--(B3)
with $[n]$ replaced by $[n]\setminus\bfc$. The remaining sources keep their
indices, so that for three sources and the context $S_3$ the four elements of
$\B_2$ carry the labels $1{\cdot}2$, $1$, $2$ and $12$. The first
decomposition groups the atoms of the PID of all $n$ sources; the second
recomputes the PID of the remaining sources under the conditional
distributions.

\subsection{The contextual decomposition}

Consider an atom $f$ that the context does not give access to,
$f(\bfc)=0$. An observer who already knows the context and then learns
$S_{\bfa}$, for a collection $\bfa$ of remaining sources, gains access to
this atom exactly when the combined collection $\bfa\cup\bfc$ gives access
to it. Seen from the context, the atom therefore has the \emph{access
pattern}
\begin{equation}
f_{\bfc}(\bfa):=f(\bfa\cup\bfc),\qquad\bfa\subseteq[n]\setminus\bfc.
\label{eq:access-pattern}
\end{equation}
This is a parthood distribution of the remaining sources,
$f_{\bfc}\in\B_{n-|\bfc|}$: it vanishes at $\bfa=\emptyset$ because
$f(\bfc)=0$, it equals one at $[n]\setminus\bfc$, and it inherits
monotonicity from $f$.

In terms of antichain labels, passing from $f$ to $f_{\bfc}$ is simple: erase
the context sources from every element of the label, and keep the minimal
elements. For the context $S_3$, for instance, the atom $1{\cdot}23$, which is
accessible from $S_1$ alone and from $S_2$ and $S_3$ together, becomes
$1{\cdot}2$: an observer who already knows $S_3$ gains access to it by
learning either $S_1$ or $S_2$. The atoms that the context provides itself,
$f(\bfc)=1$, are those whose label contains an element made up of context
sources only.

Atoms $f$ with the same access pattern $f_{\bfc}$ are unlocked by exactly the
same further sources: for an observer who knows the context, they play the
same role. Collecting them yields the first decomposition, a coarse-graining
of the PID of all sources.

\begin{definition}[Contextual decomposition]
\label{def:contextual}
The \emph{contextual decomposition} of the information beyond the context
$\bfc$ consists of the atoms
\begin{equation}
\Pi(g;T\ctx\bfc):=\sum_{\substack{f\in\B_n\\ f(\bfc)=0,\ f_{\bfc}=g}}\Pi(f;T),
\qquad g\in\B_{n-|\bfc|}.
\label{eq:contextual}
\end{equation}
\end{definition}

That this is indeed a decomposition of the information beyond the context is
the content of the following proposition: the contextual atoms add up to the
conditional mutual informations, just as the atoms of a PID add up to the
mutual informations.

\begin{proposition}
\label{prop:contextual-consistency}
For every collection $\bfa\subseteq[n]\setminus\bfc$,
\begin{equation}
\sum_{\substack{g\in\B_{n-|\bfc|}\\ g(\bfa)=1}}\Pi(g;T\ctx\bfc)=I(\bfa;T\mid\bfc).
\label{eq:contextual-consistency}
\end{equation}
\end{proposition}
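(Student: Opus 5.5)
The plan is to substitute the definition of the contextual atoms and then collapse the double sum. Inserting \eqref{eq:contextual} into the left-hand side of \eqref{eq:contextual-consistency} gives
\[
\sum_{\substack{g\in\B_{n-|\bfc|}\\ g(\bfa)=1}}\ \sum_{\substack{f\in\B_n\\ f(\bfc)=0,\ f_{\bfc}=g}}\Pi(f;T).
\]
As established just before \Cref{def:contextual}, every $f$ with $f(\bfc)=0$ has $f_{\bfc}\in\B_{n-|\bfc|}$. Hence each such $f$ appears in exactly one inner sum, namely the one with $g=f_{\bfc}$. The condition $g(\bfa)=1$ then reads $f_{\bfc}(\bfa)=f(\bfa\cup\bfc)=1$, by \eqref{eq:access-pattern}. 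The double sum therefore equals
\[
\sum_{\substack{f\in\B_n\\ f(\bfc)=0,\ f(\bfa\cup\bfc)=1}}\Pi(f;T).
\]

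Next I split the atoms accessible from $\bfa\cup\bfc$ according to whether they are accessible from $\bfc$. By (B3), $f(\bfc)=1$ implies $f(\bfa\cup\bfc)=1$. So the set $\{f: f(\bfa\cup\bfc)=1\}$ is the disjoint union of $\{f: f(\bfc)=1\}$ and $\{f: f(\bfc)=0,\ f(\bfa\cup\bfc)=1\}$. Applying the consistency equations \eqref{eq:consistency} to the collections $\bfa\cup\bfc$ and $\bfc$, the sum above equals $I(\bfa\cup\bfc;T)-I(\bfc;T)$.

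By the chain rule of mutual information, $I(S_{\bfa\cup\bfc};T)=I(S_{\bfc};T)+I(S_{\bfa};T\mid S_{\bfc})$, so this difference is $I(\bfa;T\mid\bfc)$, as claimed.

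There is no real obstacle. The only points needing care are that $f\mapsto f_{\bfc}$ is well defined into $\B_{n-|\bfc|}$, which was already verified in the text, and the edge cases. For $\bfa=\emptyset$ both sides vanish, since no $g$ satisfies $g(\emptyset)=1$ and $I(\emptyset;T\mid\bfc)=0$. For $\bfc=\emptyset$ the statement reduces to the consistency equations themselves, with the condition $f(\emptyset)=0$ holding automatically by (B1).
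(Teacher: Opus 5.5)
Your proof is correct and follows essentially the same route as the paper's. You collapse the double sum to the atoms with $f(\bfc)=0$ and $f(\bfa\cup\bfc)=1$, add back the atoms accessible from the context via (B3), and conclude with the consistency equations and the chain rule that the sum is $I(\bfa\cup\bfc;T)-I(\bfc;T)=I(\bfa;T\mid\bfc)$.
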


\begin{proof}
The left-hand side is the sum of the atoms $\Pi(f;T)$ with $f(\bfc)=0$ and
$f(\bfa\cup\bfc)=1$. The atoms with $f(\bfc)=1$ also satisfy
$f(\bfa\cup\bfc)=1$, by monotonicity, and sum to $I(\bfc;T)$. Adding them
yields $I(\bfa\cup\bfc;T)$, so the left-hand side equals
$I(\bfa\cup\bfc;T)-I(\bfc;T)=I(\bfa;T\mid\bfc)$.
\end{proof}

\begin{example}[Three sources]
\label{ex:three-sources}
Let $n=3$ and $\bfc=\{3\}$. The five atoms $3$, $1{\cdot}3$, $2{\cdot}3$,
$1{\cdot}2{\cdot}3$ and $3{\cdot}12$ are accessible from $S_3$ and make up
$I(S_3;T)$. The remaining thirteen are grouped onto the four atoms $1{\cdot}2$,
$1$, $2$ and $12$, as listed in \Cref{tab:n3-grouping} and shown in
\Cref{fig:two-decompositions}A. The contextual redundancy
$\Pi(1{\cdot}2;T\ctx3)$ collects the atoms that become accessible when either
$S_1$ or $S_2$ is added to the context; the contextual unique information
$\Pi(1;T\ctx3)$ collects those that become accessible by adding $S_1$ but not
by adding $S_2$; and the contextual synergy $\Pi(12;T\ctx3)$ collects those
that become accessible only when both are added.
\end{example}

\begin{table}[htb]
\centering\small
\renewcommand{\arraystretch}{1.2}
\begin{tabular}{@{}lll@{}}
\toprule
Contextual atom & Role given $S_3$ & Atoms of the three-source PID\\
\midrule
$\Pi(1{\cdot}2;T\ctx 3)$ & redundancy & $1{\cdot}2,\ 1{\cdot}23,\ 2{\cdot}13,\ 13{\cdot}23,\ 12{\cdot}13{\cdot}23$\\
$\Pi(1;T\ctx 3)$ & unique information of $S_1$ & $1,\ 13,\ 12{\cdot}13$\\
$\Pi(2;T\ctx 3)$ & unique information of $S_2$ & $2,\ 23,\ 12{\cdot}23$\\
$\Pi(12;T\ctx 3)$ & synergy & $12,\ 123$\\
\midrule
--- & accessible from $S_3$ & $3,\ 1{\cdot}3,\ 2{\cdot}3,\ 1{\cdot}2{\cdot}3,\ 3{\cdot}12$\\
\bottomrule
\end{tabular}
\caption{The contextual decomposition for three sources and the context
$S_3$ (\Cref{ex:three-sources}). Erasing $3$ from the label of an atom that
$S_3$ does not provide, and keeping the minimal elements, yields one of the
four labels of the remaining sources $S_1$ and $S_2$. The five atoms whose
label contains the element $3$ make up $I(S_3;T)$.}
\label{tab:n3-grouping}
\end{table}

\begin{figure}[tbp]
\centering
\begingroup
\widefigure{%
\begin{tikzpicture}[font=\small,line cap=round,line join=round]
\tikzset{
  paneltitle/.style={font=\bfseries\large,text=ink,anchor=west},
  panelsub/.style={font=\small,text=muted,anchor=west},
  singlecurve/.style={draw=ink!62,line width=0.65pt},
  ctxcurve/.style={draw=ink,line width=0.95pt},
  paircurve/.style={draw=ink!80,line width=0.85pt},
  othercurve/.style={draw=ink!50,line width=0.65pt,dash pattern=on 3.2pt off 2pt},
  fullcurve/.style={draw=ink,line width=1.05pt},
  atomlabel/.style={font=\fontsize{6.4}{6.8}\selectfont,text=ink,inner sep=0.2pt},
  biglabel/.style={font=\fontsize{9.5}{10}\selectfont,text=ink,inner sep=0.3pt},
  callout/.style={font=\fontsize{8.4}{9}\selectfont,text=muted,fill=white,inner sep=1.2pt},
  leader/.style={draw=muted!78,line width=0.45pt},
  swatch/.style={minimum width=4.2mm,minimum height=3.2mm,inner sep=0pt,draw=ink!35,line width=0.4pt},
  legendtext/.style={font=\small,text=ink,anchor=west}
}

\node[paneltitle] at (0,7.95) {A\quad Contextual decomposition};
\node[panelsub] at (0,7.5) {PID of $(S_1,S_2,S_3)$; atoms grouped by their access given $S_3$};
\draw[Accent,line width=1.1pt] (0,7.2) -- (8.3,7.2);

\begin{scope}[shift={(4.15,2.45)},x=.034cm,y=.034cm]
  \coordinate (O)   at (0,0);
  \coordinate (C3)  at (0,36);
  \coordinate (C1)  at ({-18*sqrt(3)},-18);
  \coordinate (C2)  at ({ 18*sqrt(3)},-18);
  \coordinate (E13) at ({-9*sqrt(3)},9);
  \coordinate (E23) at ({ 9*sqrt(3)},9);
  \coordinate (E12) at (0,-18);
  \def\ellA{{ [rotate around={60:(E13)}] (E13) ellipse[x radius=93.5,y radius=74] }}
  \def\ellB{{ [rotate around={-60:(E23)}] (E23) ellipse[x radius=93.5,y radius=74] }}

  \begin{scope}
    \clip (O) circle[radius=105];
    \fill[SynC!26] (-110,-110) rectangle (110,110);
    \fill[UnqC!30] \ellA;
    \fill[UnqC!30] \ellB;
    \begin{scope}
      \clip \ellA;
      \fill[RedC!30] \ellB;
    \end{scope}
    \fill[CtxC!32] (C3) circle[radius=49];
  \end{scope}

  \draw[othercurve] (E12) ellipse[x radius=93.5,y radius=74];
  \draw[singlecurve] (C1) circle[radius=49];
  \draw[singlecurve] (C2) circle[radius=49];
  \draw[paircurve] \ellA;
  \draw[paircurve] \ellB;
  \draw[ctxcurve] (C3) circle[radius=49];
  \draw[fullcurve] (O) circle[radius=105];

  \node[atomlabel] at (-3.75,71) {$3$};
  \node[atomlabel] at (0,40.5) {$3{\cdot}12$};
  \node[atomlabel] at (-25,14.5) {$1{\cdot}3$};
  \node[atomlabel] at (25,14.5) {$2{\cdot}3$};
  \node[atomlabel] at (0,0) {$1{\cdot}2{\cdot}3$};
  \node[atomlabel] at (0,-29) {$1{\cdot}2$};
  \node[atomlabel] at (-35.25,-20.25) {$1{\cdot}23$};
  \node[atomlabel] at (35.25,-20.25) {$2{\cdot}13$};
  \node[atomlabel] at (18,88) {$13{\cdot}23$};
  \node[atomlabel] at (56,32) {$\star$};
  \node[atomlabel] at (-56,32) {$\star$};
  \node[atomlabel] at (0,-62.75) {$\star$};
  \node[atomlabel,anchor=west,font=\fontsize{7.2}{7.6}\selectfont]
    at (46,-112) {$\star\;=\;12{\cdot}13{\cdot}23$};
  \node[atomlabel] at (-61.5,-35.75) {$1$};
  \node[atomlabel,rotate=-42] at (-58.75,-66.5) {$12{\cdot}13$};
  \node[atomlabel] at (-72.25,41.75) {$13$};
  \node[atomlabel] at (61.5,-35.75) {$2$};
  \node[atomlabel,rotate=42] at (58.75,-66.5) {$12{\cdot}23$};
  \node[atomlabel] at (72.25,41.75) {$23$};
  \node[atomlabel] at (0,-83.5) {$12$};
  \node[atomlabel] at (-85.75,48.75) {$123$};

  \node[callout] (l3) at (-28,116) {$I_3$};
  \draw[leader] (l3.south east) -- (0,85);
  \node[callout] (l1) at (-116,-22) {$I_1$};
  \draw[leader] (l1.east) -- ({-18*sqrt(3)-49},-18);
  \node[callout] (l2) at (116,-22) {$I_2$};
  \draw[leader] (l2.west) -- ({18*sqrt(3)+49},-18);
  \node[callout] (l13) at (-112,79) {$I_{13}$};
  \draw[leader] (l13.east) -- (-65.4,65.7);
  \node[callout] (l23) at (112,79) {$I_{23}$};
  \draw[leader] (l23.west) -- (65.4,65.7);
  \node[callout] (l12) at (0,-116) {$I_{12}$};
  \draw[leader] (l12.north) -- (0,-92);
  \node[callout] (l123) at (88,114) {$I_{123}$};
  \draw[leader] (l123.south west) -- (74.25,74.25);
\end{scope}

\draw[{Stealth[length=2.6mm,width=1.9mm]}-{Stealth[length=2.6mm,width=1.9mm]},
      draw=Accent,line width=1.1pt] (8.75,2.45) -- (10.55,2.45);
\node[font=\small,text=Accent,align=center] at (9.65,3.05) {agree iff\\[-1pt]\textbf{SCR}};
\node[font=\footnotesize,text=muted] at (9.65,1.95) {Thm.~\ref{thm:scr-atoms}};

\node[paneltitle] at (10.9,7.95) {B\quad Conditional decomposition};
\node[panelsub] at (10.9,7.5) {PID of $(S_1,S_2)$ under $p(\,\cdot\mid S_3{=}z)$, averaged over $z$};
\draw[Accent,line width=1.1pt] (10.9,7.2) -- (17.2,7.2);

\begin{scope}[shift={(14.05,2.45)},x=.030cm,y=.030cm]
  \coordinate (O) at (0,0);
  \coordinate (T1) at (-28,0);
  \coordinate (T2) at (28,0);
  \begin{scope}
    \clip (O) ellipse[x radius=112,y radius=86];
    \fill[SynC!26] (-115,-90) rectangle (115,90);
    \fill[UnqC!30] (T1) circle[radius=62];
    \fill[UnqC!30] (T2) circle[radius=62];
    \begin{scope}
      \clip (T1) circle[radius=62];
      \fill[RedC!30] (T2) circle[radius=62];
    \end{scope}
  \end{scope}
  \draw[paircurve] (T1) circle[radius=62];
  \draw[paircurve] (T2) circle[radius=62];
  \draw[fullcurve] (O) ellipse[x radius=112,y radius=86];

  \node[biglabel] at (-63,0) {$1$};
  \node[biglabel] at (63,0) {$2$};
  \node[biglabel] at (0,0) {$1{\cdot}2$};
  \node[biglabel] at (0,71) {$12$};

  \node[callout] (b1) at (-110,74) {$I_{1\mid 3}$};
  \draw[leader] (b1.south east) -- ($(T1)+(136:62)$);
  \node[callout] (b2) at (110,74) {$I_{2\mid 3}$};
  \draw[leader] (b2.south west) -- ($(T2)+(44:62)$);
  \node[callout] (b12) at (0,106) {$I_{12\mid 3}$};
  \draw[leader] (b12.south) -- (0,86);
\end{scope}

\begin{scope}[shift={(0.3,-2.3)}]
  \node[swatch,fill=CtxC!32] at (0,0) {};
  \node[legendtext] at (0.25,0) {accessible from the context ($I_3$)};
  \node[swatch,fill=RedC!30] at (5.6,0) {};
  \node[legendtext] at (5.85,0) {redundancy $1{\cdot}2$};
  \node[swatch,fill=UnqC!30] at (9.0,0) {};
  \node[legendtext] at (9.25,0) {unique information $1$, $2$};
  \node[swatch,fill=SynC!26] at (13.45,0) {};
  \node[legendtext] at (13.7,0) {synergy $12$};
\end{scope}
\end{tikzpicture}%
}
\endgroup
\caption{Two decompositions of the information that $S_1$ and $S_2$ provide
about $T$ beyond the context $S_3$. \textbf{(A)}~Information diagram of the
three-source PID~\citep{williams2010nonnegative,gutknecht2021bits}. Each
curve represents a mutual-information term $I_{\mathbf a}:=I(\mathbf a;T)$ and
each cell an atom, labeled by its antichain ($1{\cdot}23$ denotes
$\{\{1\},\{2,3\}\}$; the three starred cells together form the atom
$12{\cdot}13{\cdot}23$). The five atoms inside $I_3$ are accessible from the
context alone. Each of the remaining thirteen becomes accessible by adding
$S_{\mathbf a}$ to $S_3$ exactly when it lies in $I_{\mathbf a\cup\{3\}}$, so
the atoms inside both $I_{13}$ and $I_{23}$ form the contextual redundancy,
those inside exactly one of them the contextual unique information, and
those inside neither the contextual synergy (\Cref{tab:n3-grouping}).
\textbf{(B)}~The conditional decomposition: the PID of the remaining sources
$(S_1,S_2)$ under each conditional distribution given $S_3$, averaged, with
$I_{\mathbf a\mid 3}:=I(\mathbf a;T\mid S_3)$. Both panels decompose the same
conditional mutual information $I(S_1,S_2;T\mid S_3)$, and the source chain
rule holds exactly when they agree atom by atom.}
\label{fig:two-decompositions}
\end{figure}

\subsection{The conditional decomposition}

The second way follows the classical construction of conditional
information: evaluate the same functional under the conditional
distributions and average \citep{matthias2025novel}. For a random variable
$Z$,
\begin{equation}
\begin{aligned}
\Pi(f;T\mid Z)&:=\sum_z p(z)\,\Pi^{p(\cdot\mid Z=z)}(f;T),\\
\Icap(\bfa_1,\ldots,\bfa_m;T\mid Z)&:=\sum_z p(z)\,\Icap^{p(\cdot\mid Z=z)}(\bfa_1,\ldots,\bfa_m;T),
\end{aligned}
\label{eq:conditional}
\end{equation}
where the superscript indicates evaluation under the conditional distribution
of $(\bfS,T)$ given $Z=z$. Since Möbius inversion is linear, conditional
atoms and conditional redundancies are related as their unconditional
counterparts, and the conditional atoms decompose conditional mutual
information.

We now condition on the context, $Z=S_{\bfc}$, and write ${}\mid\bfc$ for
this conditioning. Under each conditional distribution, the context sources
are constant, and the information about $T$ comes from the remaining sources.
We therefore apply the classical construction to the remaining sources:
under each conditional distribution, we compute the PID of the remaining
sources and average over the context. For $\Imin$ and the context $S_3$, for
instance, one computes the two-source $\Imin$-PID of $(S_1,S_2)$ under the
conditional distribution given $S_3=z$, and averages over $z$.

\begin{definition}[Conditional decomposition]
\label{def:conditional}
The \emph{conditional decomposition} of the information beyond the context
$\bfc$ consists of the atoms
\begin{equation}
\Pi(g;T\mid\bfc):=\sum_z p(z)\,\Pi^{p(\cdot\mid S_{\bfc}=z)}(g;T),
\qquad g\in\B_{n-|\bfc|},
\label{eq:conditional-decomposition}
\end{equation}
where $\Pi^{p(\cdot\mid S_{\bfc}=z)}(g;T)$ are the atoms of the PID of the
remaining sources under the conditional distribution given $S_{\bfc}=z$.
\end{definition}

For three sources and the context $S_3$, this is the PID of $(S_1,S_2)$ under
each conditional distribution given $S_3$, averaged
(\Cref{fig:two-decompositions}B). It is again a decomposition of the
information beyond the context, and of the same conditional mutual
informations.

\begin{proposition}
\label{prop:conditional-consistency}
For every collection $\bfa\subseteq[n]\setminus\bfc$,
\begin{equation}
\sum_{\substack{g\in\B_{n-|\bfc|}\\ g(\bfa)=1}}\Pi(g;T\mid\bfc)=I(\bfa;T\mid\bfc).
\label{eq:conditional-consistency}
\end{equation}
\end{proposition}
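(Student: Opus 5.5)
The plan is to reduce the claim, pointwise in the value of the context, to the consistency equations of the PID of the remaining sources, and then average. Fix a value $z$ of $S_{\bfc}$ with $p(z)>0$. The atoms $\Pi^{p(\cdot\mid S_{\bfc}=z)}(g;T)$, $g\in\B_{n-|\bfc|}$, form the PID of the remaining sources $S_{[n]\setminus\bfc}$ under the conditional distribution of $(\bfS,T)$ given $S_{\bfc}=z$. This is legitimate because the PID is defined for any number of sources, so it can be applied to the subfamily $S_{[n]\setminus\bfc}$. By \Cref{def:pid} applied to this subfamily and this distribution, for every $\bfa\subseteq[n]\setminus\bfc$ we have
\[
\sum_{\substack{g\in\B_{n-|\bfc|}\\ g(\bfa)=1}}\Pi^{p(\cdot\mid S_{\bfc}=z)}(g;T)=I^{p(\cdot\mid S_{\bfc}=z)}(S_{\bfa};T),
\]
where the right-hand side is the mutual information between $S_{\bfa}$ and $T$ computed under the conditional distribution. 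This equals $I(S_{\bfa};T\mid S_{\bfc}=z)$.

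Next I multiply by $p(z)$ and sum over $z$. On the left, the sums over $z$ and over $g$ are both finite, so they can be exchanged. Using \Cref{def:conditional}, the left-hand side then becomes $\sum_{g:\,g(\bfa)=1}\Pi(g;T\mid\bfc)$. On the right, $\sum_z p(z)\,I(S_{\bfa};T\mid S_{\bfc}=z)$ is by definition the conditional mutual information $I(S_{\bfa};T\mid S_{\bfc})=I(\bfa;T\mid\bfc)$. This is exactly \eqref{eq:conditional-consistency}.

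No step is a real obstacle; the only points needing care are bookkeeping. First, the index set: the PID of the remaining sources must be indexed by $\B_{n-|\bfc|}$ with the original labels. This holds because the remaining sources keep their indices, so the consistency equations of the subfamily run over exactly the same $g$ and $\bfa$ as in the statement. Second, the edge case $\bfa=\emptyset$: both sides vanish, since no $g$ satisfies $g(\emptyset)=1$ by (B1), and the conditional information of a constant is zero. Third, values $z$ with $p(z)=0$ are simply omitted from the averages, so no conditional distribution needs to be defined for them.
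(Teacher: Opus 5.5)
Your proof is correct and follows the paper's argument: apply the consistency equations of the PID of the remaining sources under each conditional distribution given $S_{\bfc}=z$, then average over $z$ with weights $p(z)$. The bookkeeping points you raise (keeping the original labels, the case $\bfa=\emptyset$, and values $z$ with $p(z)=0$) are handled correctly; the paper leaves them implicit.
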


\begin{proof}
Under each conditional distribution, the PID of the remaining sources
satisfies the consistency equations; averaging over $z$ yields
\cref{eq:conditional-consistency}.
\end{proof}

By \Cref{prop:contextual-consistency,prop:conditional-consistency}, the
contextual and the conditional decomposition decompose the same conditional
mutual informations into atoms indexed by $\B_{n-|\bfc|}$, and the
consistency equations alone do not decide between them.

\subsection{The source chain rule}

When do the two decompositions agree? By \Cref{lem:mobius}, applied to the
remaining sources, a decomposition is determined by its redundancies, the
sums over all $g$ with $g(\bfa_1)=\cdots=g(\bfa_m)=1$. For the two
decompositions, these are as follows.

\begin{proposition}
\label{prop:redundancies}
For all nonempty $\bfa_1,\ldots,\bfa_m\subseteq[n]\setminus\bfc$,
\begin{align}
\sum_{\substack{g\in\B_{n-|\bfc|}\\ g(\bfa_1)=\cdots=g(\bfa_m)=1}}\Pi(g;T\ctx\bfc)
&=\Icap(\bfa_1\cup\bfc,\ldots,\bfa_m\cup\bfc;T)-I(\bfc;T),
\label{eq:contextual-redundancy}\\
\sum_{\substack{g\in\B_{n-|\bfc|}\\ g(\bfa_1)=\cdots=g(\bfa_m)=1}}\Pi(g;T\mid\bfc)
&=\Icap(\bfa_1,\ldots,\bfa_m;T\mid\bfc).
\label{eq:conditional-redundancy}
\end{align}
\end{proposition}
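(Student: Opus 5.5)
For the contextual identity \cref{eq:contextual-redundancy}, I will unfold \Cref{def:contextual} and exchange the order of summation. Every $f\in\B_n$ with $f(\bfc)=0$ contributes to exactly one contextual atom, namely $g=f_{\bfc}$. The condition $g(\bfa_i)=1$ then reads $f(\bfa_i\cup\bfc)=1$ by \cref{eq:access-pattern}. So the left-hand side equals
\begin{equation*}
\sum_{\substack{f\in\B_n,\ f(\bfc)=0\\ f(\bfa_1\cup\bfc)=\cdots=f(\bfa_m\cup\bfc)=1}}\Pi(f;T).
\end{equation*}
Next I will add and subtract the atoms with $f(\bfc)=1$, as in the proof of \Cref{prop:contextual-consistency}. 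By (B3), each such atom automatically satisfies $f(\bfa_i\cup\bfc)=1$ for all $i$. By the consistency equations \cref{eq:consistency}, these atoms sum to $I(\bfc;T)$. The unrestricted sum over $f$ with $f(\bfa_i\cup\bfc)=1$ for all $i$ is $\Icap(\bfa_1\cup\bfc,\ldots,\bfa_m\cup\bfc;T)$ by \Cref{def:redundancy}. Subtracting $I(\bfc;T)$ gives the claim. The degenerate case $\bfc=\emptyset$ needs no special treatment, since $f(\emptyset)=0$ for all $f$ and $I(\emptyset;T)=0$.

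For the conditional identity \cref{eq:conditional-redundancy}, I will insert \Cref{def:conditional} and pull the finite sum over $g$ inside the average over $z$. For each fixed $z$, the inner sum $\sum_{g:\,g(\bfa_1)=\cdots=g(\bfa_m)=1}\Pi^{p(\cdot\mid S_{\bfc}=z)}(g;T)$ is, by \Cref{def:redundancy} applied to the PID of the remaining sources, the redundancy $\Icap^{p(\cdot\mid S_{\bfc}=z)}(\bfa_1,\ldots,\bfa_m;T)$ computed among the sources indexed by $[n]\setminus\bfc$. Averaging with weights $p(z)$ then gives $\Icap(\bfa_1,\ldots,\bfa_m;T\mid\bfc)$ by \cref{eq:conditional}.

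The only point needing care is interpretive rather than computational. The conditional redundancy in \cref{eq:conditional} must be read as the redundancy in the PID of the remaining sources under each conditional distribution, which is what \Cref{def:conditional} uses. It should not be read as the $n$-source redundancy under the conditional law. I will state this reading explicitly, noting that the measure is assumed to apply to any subfamily of sources. With that settled, both identities are just rearrangements of finite sums, and I do not expect a real obstacle. The mild subtlety is checking that the correspondence $f\mapsto f_{\bfc}$ converts the constraints exactly, which follows directly from \cref{eq:access-pattern}.
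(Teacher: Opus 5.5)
Your proposal is correct and follows the paper's proof essentially step for step. For the contextual identity you reindex through $f\mapsto f_{\bfc}$ and then add and subtract the atoms with $f(\bfc)=1$; for the conditional identity you exchange the finite sums. Your explicit reading of the conditional redundancy as the redundancy in the PID of the remaining sources under each conditional distribution is exactly how the paper's proof applies the definition of redundancy there.
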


\begin{proof}
For \cref{eq:contextual-redundancy}, we first express the contextual atoms
through the atoms of the full PID and then complete the resulting sum:
\begin{align*}
\sum_{\substack{g\in\B_{n-|\bfc|}\\ g(\bfa_1)=\cdots=g(\bfa_m)=1}}\Pi(g;T\ctx\bfc)
&=\sum_{\substack{f\in\B_n,\ f(\bfc)=0\\ f(\bfa_1\cup\bfc)=\cdots=f(\bfa_m\cup\bfc)=1}}\Pi(f;T)\\
&=\sum_{\substack{f\in\B_n\\ f(\bfa_1\cup\bfc)=\cdots=f(\bfa_m\cup\bfc)=1}}\Pi(f;T)
\;-\sum_{\substack{f\in\B_n\\ f(\bfc)=1}}\Pi(f;T).
\end{align*}
The first equality inserts \cref{eq:contextual}: each atom $f$ with
$f(\bfc)=0$ contributes to exactly one contextual atom, namely $g=f_{\bfc}$,
and $f_{\bfc}(\bfa_i)=f(\bfa_i\cup\bfc)$. For the second, we add and subtract
the atoms with $f(\bfc)=1$. By (B3), each of them satisfies
$f(\bfa_i\cup\bfc)=1$ for all $i$, so adding them to the first sum yields the
sum over all atoms with $f(\bfa_i\cup\bfc)=1$ for all $i$. By
\cref{eq:redundancy} and the consistency equations, the two sums on the right
are $\Icap(\bfa_1\cup\bfc,\ldots,\bfa_m\cup\bfc;T)$ and $I(\bfc;T)$.

For \cref{eq:conditional-redundancy}, we insert
\cref{eq:conditional-decomposition} and exchange the two finite sums:
\begin{equation*}
\sum_{\substack{g\in\B_{n-|\bfc|}\\ g(\bfa_1)=\cdots=g(\bfa_m)=1}}\Pi(g;T\mid\bfc)
=\sum_zp(z)\sum_{\substack{g\in\B_{n-|\bfc|}\\ g(\bfa_1)=\cdots=g(\bfa_m)=1}}\Pi^{p(\cdot\mid S_{\bfc}=z)}(g;T).
\end{equation*}
By \cref{eq:redundancy}, applied to the PID of the remaining sources under
the conditional distribution given $S_{\bfc}=z$, the inner sum is the
redundancy $\Icap^{p(\cdot\mid S_{\bfc}=z)}(\bfa_1,\ldots,\bfa_m;T)$.
Averaging over $z$ yields the conditional redundancy \cref{eq:conditional}.
\end{proof}

In words, the contextual redundancy is what the enlarged collections
$\bfa_i\cup\bfc$ have in common, minus what the context provides by itself,
and the conditional redundancy is the redundancy recomputed under the
conditional distributions. By
\Cref{lem:mobius}, the two decompositions therefore agree exactly when
these redundancies agree for all collections, that is, when the following
identity holds. We call it the source chain rule.

\begin{definition}[Source chain rule]
\label{def:scr}
A redundancy measure satisfies the \emph{source chain rule} (\pp{SCR}) if
\begin{equation}
\Icap(\bfa_1\cup\bfc,\ldots,\bfa_m\cup\bfc;T)
=I(\bfc;T)+\Icap(\bfa_1,\ldots,\bfa_m;T\mid\bfc)
\label{eq:scr}
\end{equation}
for every $\bfc\subseteq[n]$, all nonempty
$\bfa_1,\ldots,\bfa_m\subseteq[n]\setminus\bfc$, and every joint distribution
of $(\bfS,T)$.
\end{definition}

The left-hand side is the information that the enlarged collections
$\bfa_i\cup\bfc$ have in common. Since each of them contains the context,
this includes the information carried by the context. Beyond that, an
observer who already knows the context and is then given any one of the
$S_{\bfa_i}$ learns whatever these sources provide redundantly given the
context. For a
single collection, the source chain rule is the chain rule
$I(\bfa\cup\bfc;T)=I(\bfc;T)+I(\bfa;T\mid\bfc)$ of mutual information.

\begin{example}[Three sources, continued]
\label{ex:scr-three}
For $n=3$ and the context $S_3$, the source chain rule reads
\begin{equation}
\Icap(13{\cdot}23;T)=I(S_3;T)+\Icap(1{\cdot}2;T\mid S_3).
\label{eq:scr-three}
\end{equation}
In \Cref{fig:two-decompositions}A, the left-hand side is everything inside
both $I_{13}$ and $I_{23}$. The diagram splits it into the gray atoms,
which make up $I(S_3;T)$, and the blue ones, which form the contextual
redundancy $\Pi(1{\cdot}2;T\ctx3)$: the information that becomes accessible
whichever of $S_1$ and $S_2$ is added to the context. Conceptually, the blue
part is what both sources provide once $S_3$ is known, that is, a conditional
redundancy, for which there is a standard construction
(\Cref{fig:two-decompositions}B). The source chain rule requires the two
blue regions to carry the same amount of information.
\end{example}

In summary, the two routes in the following diagram
lead to the same atoms exactly when the source chain rule holds:
\[
\begin{tikzpicture}[x=1cm,y=1cm,baseline=(P.base),
  lab/.style={font=\small,align=center,inner sep=2pt},
  arr/.style={-{Stealth[length=2.2mm,width=1.6mm]},line width=0.55pt,
    shorten <=2pt,shorten >=2pt}]
\node (P) at (0,0) {$p(\bfS,T)$};
\node (A) at (2.9,1.25) {$\{\Pi(f;T)\}_{f}$};
\node (B) at (2.9,-1.25) {$\{p(\cdot\mid S_{\bfc}{=}z)\}_{z}$};
\node (C) at (9.7,1.25) {$\{\Pi(g;T\ctx\bfc)\}_{g}$};
\node (D) at (9.7,-1.25) {$\{\Pi(g;T\mid\bfc)\}_{g}$};
\draw[arr] (P) -- node[lab,above left] {decompose} (A);
\draw[arr] (P) -- node[lab,below left] {condition on $S_{\bfc}$} (B);
\draw[arr] (A) -- node[lab,above=1pt] {group by access given $S_{\bfc}$} (C);
\draw[arr] (B) -- node[lab,below=1pt] {decompose the remaining\\ sources and average} (D);
\draw[double,double distance=1.5pt,line width=0.45pt,shorten <=1pt,shorten >=1pt]
  (C) -- node[lab,left=3pt] {agree iff SCR} (D);
\end{tikzpicture}
\]

\begin{theorem}[When contextual equals conditional]
\label{thm:scr-atoms}
Let $\bfc\subseteq[n]$. For every joint distribution of $(\bfS,T)$, the
following statements are equivalent.
\begin{enumerate}[label=\textnormal{(\roman*)},leftmargin=2.4em,itemsep=0.2em]
  \item The source chain rule \cref{eq:scr} holds for the context $\bfc$ and
        all nonempty $\bfa_1,\ldots,\bfa_m\subseteq[n]\setminus\bfc$.
  \item The contextual and the conditional decomposition coincide:
        $\Pi(g;T\ctx\bfc)=\Pi(g;T\mid\bfc)$ for every $g\in\B_{n-|\bfc|}$.
\end{enumerate}
\end{theorem}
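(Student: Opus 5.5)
The plan is to reduce the equivalence to \Cref{lem:mobius}, applied to the remaining sources $[n]\setminus\bfc$, with \Cref{prop:redundancies} supplying the relevant sums. Fix $\bfc$ and a joint distribution. Consider the two families $x_1(g):=\Pi(g;T\ctx\bfc)$ and $x_2(g):=\Pi(g;T\mid\bfc)$ indexed by $g\in\B_{n-|\bfc|}$. For any nonempty $\bfa_1,\ldots,\bfa_m\subseteq[n]\setminus\bfc$, \cref{eq:contextual-redundancy} identifies the sum of $x_1$ over all $g$ with $g(\bfa_1)=\cdots=g(\bfa_m)=1$ as $\Icap(\bfa_1\cup\bfc,\ldots,\bfa_m\cup\bfc;T)-I(\bfc;T)$. \Cref{eq:conditional-redundancy} identifies the corresponding sum of $x_2$ as $\Icap(\bfa_1,\ldots,\bfa_m;T\mid\bfc)$. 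Statement (i) for this $\bfc$ says exactly that these two sums coincide for every such choice of collections.

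For (i)$\Rightarrow$(ii), I note that under (i) the families $x_1$ and $x_2$ have the same sums over every set $\{g: g(\bfa_1)=\cdots=g(\bfa_m)=1\}$. By the uniqueness part of \Cref{lem:mobius}, used for $n-|\bfc|$ sources, this forces $x_1=x_2$. The lemma is stated for $[n]$, but its proof uses only (B1)--(B3) and finiteness, so it transfers verbatim to $\B_{n-|\bfc|}$ on the ground set $[n]\setminus\bfc$. For (ii)$\Rightarrow$(i), if $x_1=x_2$ then any sums of them agree. Equating the right-hand sides of \cref{eq:contextual-redundancy} and \cref{eq:conditional-redundancy} and rearranging gives \cref{eq:scr}.

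Two edge cases need checking. If $\bfc=[n]$, no nonempty $\bfa_i$ exists, so (i) is vacuous. $\B_0$ is also empty, since (B1) and (B2) would require $f(\emptyset)$ to be both $0$ and $1$, so (ii) is vacuous as well. I also need that in the conditional decomposition the PID of the remaining sources is the one induced by the same measure applied to the subfamily, as the paper stipulates. That is what \Cref{prop:redundancies} already uses.

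The only step with real content is invoking uniqueness in \Cref{lem:mobius} for the reduced index set; everything else is bookkeeping already done in \Cref{prop:redundancies}. I expect no genuine obstacle. The point to state carefully is that the lemma needs sums only over nonempty collections, which are exactly the ones covered by the SCR quantifier.
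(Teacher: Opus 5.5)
Your proposal is correct and follows the paper's proof: \Cref{prop:redundancies} turns (i) into equality of the redundancy sums of the two decompositions. \Cref{lem:mobius}, applied to the remaining sources, turns that into atom-by-atom equality, and your check of the vacuous case $\bfc=[n]$ is a harmless addition that the paper leaves implicit.
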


\begin{proof}
By \Cref{prop:redundancies}, statement (i) says that the two decompositions
have the same redundancies. By \Cref{lem:mobius}, applied to the remaining
sources, this is equivalent to (ii).
\end{proof}

In words, under the source chain rule, conditioning on a context and
recomputing the PID gives, on average, exactly the atoms of the original PID
that the context does not provide, grouped according to their access given
the context. If the rule fails, conditioning redistributes information among
redundancy, unique information and synergy in a way that the access
structure of the original atoms does not predict.

\begin{remark}[Other PID-inducing quantities]
\label{rem:other-quantities}
Statement (ii) of \Cref{thm:scr-atoms} concerns the atoms themselves, so
redundancy plays no special role in it. Every PID-inducing quantity, that is,
every sum of atoms over a condition on their parthood relations from which the
atoms can be recovered \citep{gutknecht2025babel}, has a contextual and a
conditional version, obtained by summing the contextual or the conditional
atoms over the same condition. Requiring the two versions to agree yields a
rule equivalent to the source chain rule. For union information
\citep{griffith2014quantifying,Kolchinsky2022}, the contextual version is
$I_\cup(\bfa_1\cup\bfc,\ldots,\bfa_m\cup\bfc;T)-I(\bfc;T)$, as in
\Cref{prop:redundancies}, and the rule reads
$I_\cup(\bfa_1\cup\bfc,\ldots,\bfa_m\cup\bfc;T)=I(\bfc;T)+I_\cup(\bfa_1,\ldots,\bfa_m;T\mid\bfc)$.
\end{remark}

\begin{remark}[Where the rule has content]
\label{rem:trivial}
For two sources, the source chain rule says nothing new. With the context
$S_2$, the only collection of remaining sources is $\{1\}$, and the rule
reduces to the chain rule $I(S_1,S_2;T)=I(S_2;T)+I(S_1;T\mid S_2)$ of mutual
information; with an empty context, it holds trivially. Hence every PID
satisfies the source chain rule when $n\leq2$. In general, the rule has
content only if the context is nonempty and leaves at least two other
sources. The first nontrivial instances arise for $n=3$; up to relabeling
the sources, they are the one written out in \Cref{ex:scr-three}.
\end{remark}

\section{Incompatibility with local positivity}
\label{sec:incompatibility}

A widespread requirement on a PID, beyond the consistency equations, is
that its atoms be nonnegative
\citep{williams2010nonnegative,Bertschinger2013_shared_info}.

\begin{definition}[Local positivity]
\label{def:lp}
A PID satisfies \emph{local positivity} (\pp{LP}) if all its atoms are
nonnegative, $\Pi(f;T)\geq0$, for every number of sources and every joint
distribution.
\end{definition}

This section shows that local positivity
and the source chain rule cannot hold together. On a simple distribution,
local positivity forces apart the two blue regions of
\Cref{fig:two-decompositions}, which the rule requires to carry the same
amount (\Cref{ex:scr-three}). We present this distribution first; the
theorem and its proof follow.

\subsection{A counterexample}

\begin{example}[The context decides which source carries the target]
\label{ex:witness}
Let $C$ and $T$ be independent uniformly distributed bits, and let
\begin{equation}
S_1=\max(C,T),\qquad S_2=\min(C,T),\qquad S_3=C.
\label{eq:witness}
\end{equation}
The four outcomes are equiprobable (\Cref{fig:witness}). Given $S_3=0$, the
source $S_1$ equals $T$ and $S_2$ is constant; given $S_3=1$, the roles are
reversed. The relevant mutual informations are
\begin{equation}
\begin{gathered}
I(S_1;T)=I(S_2;T)=1-\tfrac34\,h\bigl(\tfrac13\bigr)\approx0.311,\\
I(S_1,S_2;T)=\tfrac12,\qquad I(S_3;T)=0,
\end{gathered}
\label{eq:witness-mi}
\end{equation}
where $h(x)=-x\log x-(1-x)\log(1-x)$. Indeed, $S_1=0$ occurs with
probability $\tfrac14$ and implies $T=0$, while given $S_1=1$ the target
equals $1$ with probability $\tfrac23$; the same holds for $S_2$ with the
roles of $0$ and $1$ exchanged. The pairs $(S_1,S_2)=(0,0)$ and $(1,1)$
determine $T$, while $(1,0)$ occurs with probability $\tfrac12$ and leaves
$T$ uniform. Consequently, the co-information
\citep{mcgill1954multivariate,bell2003co} of $S_1$ and $S_2$ is positive:
\begin{equation}
I(S_1;T)+I(S_2;T)-I(S_1,S_2;T)=\tfrac32\bigl(1-h\bigl(\tfrac13\bigr)\bigr)\approx0.123>0.
\label{eq:coinformation}
\end{equation}
\end{example}

\begin{figure}[tbp]
\centering
\begingroup
\widefigure{%
\begin{tikzpicture}[font=\small,line cap=round,line join=round]
\tikzset{
  paneltitle/.style={font=\bfseries\large,text=ink,anchor=west},
  panelsub/.style={font=\small,text=muted,anchor=west},
  cell/.style={minimum width=10mm,minimum height=6.2mm,inner sep=0pt,anchor=center},
  quiet/.style={text=muted!80},
  carrier/.style={fill=Accent!10,draw=Accent!45,line width=0.5pt,rounded corners=2pt},
  var/.style={draw=ink!70,line width=0.7pt,rounded corners=3pt,minimum width=13mm,
              minimum height=7.5mm,fill=white,font=\normalsize},
  silentvar/.style={var,draw=muted!60,dash pattern=on 2.4pt off 1.6pt,text=muted},
  target/.style={circle,draw=Accent,line width=0.9pt,fill=Accent!10,minimum size=9mm,
                 inner sep=0pt,font=\normalsize},
  copyarrow/.style={-{Stealth[length=2.4mm,width=1.7mm]},draw=Accent,line width=1.1pt},
  worldbox/.style={draw=rule,line width=0.6pt,rounded corners=4pt},
  note/.style={font=\small,text=ink,align=center}
}

\node[paneltitle] at (0,6.25) {A\quad Joint distribution};
\node[panelsub] at (0,5.8) {four equiprobable outcomes, $S_3=C$};
\draw[Accent,line width=1.1pt] (0,5.5) -- (6.3,5.5);

\matrix (m) [matrix of math nodes,nodes={cell},column sep=1.5mm,row sep=0.6mm,
             ampersand replacement=\&]
  at (2.75,3.05) {
  C \& T \& S_1 \& S_2 \\
  0 \& 0 \& 0 \& |[quiet]| 0 \\
  0 \& 1 \& 1 \& |[quiet]| 0 \\
  1 \& 0 \& |[quiet]| 1 \& 0 \\
  1 \& 1 \& |[quiet]| 1 \& 1 \\
};
\begin{scope}[on background layer]
  \path[carrier] ([xshift=1mm]m-2-3.north west) rectangle ([xshift=-1mm]m-3-3.south east);
  \path[carrier] ([xshift=1mm]m-4-4.north west) rectangle ([xshift=-1mm]m-5-4.south east);
\end{scope}
\draw[ink!70,line width=0.6pt] (m-1-1.south west) -- (m-1-4.south east);
\draw[rule,line width=0.5pt] (m-3-1.south west) -- (m-3-4.south east);
\node[font=\footnotesize,text=muted,anchor=north] at ([yshift=-1.5mm]m-5-1.south west -| m.south)
  {$S_1=\max(C,T)$,\quad $S_2=\min(C,T)$};
\draw[decorate,decoration={brace,amplitude=4pt},ink!55,line width=0.5pt]
  ([xshift=1.5mm]m-2-4.north east) -- ([xshift=1.5mm]m-3-4.south east)
  node[midway,right=5pt,font=\footnotesize,text=ink,align=left] {$S_1=T$};
\draw[decorate,decoration={brace,amplitude=4pt},ink!55,line width=0.5pt]
  ([xshift=1.5mm]m-4-4.north east) -- ([xshift=1.5mm]m-5-4.south east)
  node[midway,right=5pt,font=\footnotesize,text=ink,align=left] {$S_2=T$};

\node[paneltitle] at (7.3,6.25) {B\quad Conditional on the context};
\node[panelsub] at (7.3,5.8) {in each world one source copies $T$, the other is constant};
\draw[Accent,line width=1.1pt] (7.3,5.5) -- (17.2,5.5);

\foreach \cx/\cval/\on/\off/\offval in {9.65/0/1/2/0, 14.85/1/2/1/1}{
  \draw[worldbox] (\cx-2.35,0.95) rectangle (\cx+2.35,5.0);
  \node[font=\normalsize,text=ink] at (\cx,4.55) {$C=\cval$};
  \node[target] (T\cval) at (\cx-1.25,2.75) {$T$};
}
\node[var] (A0) at (10.85,3.55) {$S_1$};
\node[silentvar] (B0) at (10.85,1.95) {$S_2=0$};
\draw[copyarrow] (T0) -- (A0.west) node[midway,above,sloped,font=\footnotesize,text=Accent] {copy};
\node[silentvar] (A1) at (16.05,3.55) {$S_1=1$};
\node[var] (B1) at (16.05,1.95) {$S_2$};
\draw[copyarrow] (T1) -- (B1.west) node[midway,below,sloped,font=\footnotesize,text=Accent] {copy};

\node[note] at (12.25,0.45)
  {In neither world do both sources carry information about $T$.};
\end{tikzpicture}%
}
\endgroup
\caption{The counterexample of \Cref{thm:incompatibility}. A context bit $C$
and a target bit $T$ are independent and uniform, and the sources report
their maximum and minimum. \textbf{(A)}~The four equiprobable outcomes;
shaded cells mark the source that equals $T$ within each context, gray
entries the source that is constant. \textbf{(B)}~Given $C=0$, the source
$S_1$ is a copy of $T$ and $S_2$ is constant; given $C=1$, the roles are
reversed.}
\label{fig:witness}
\end{figure}

Before turning to the formal proof, we sketch informally why this
distribution brings the two properties into conflict. Consider the two blue regions of \Cref{fig:two-decompositions} for this
distribution. Given the context, one of $S_1$ and $S_2$ is constant; it
carries no information about $T$ and, under local positivity, therefore
shares none with the other. The conditional redundancy, the blue atom in
panel~B, is thus empty. On the other hand, without the context, $S_1$ and $S_2$ together carry
less information about $T$ than the sum of what each carries alone, and
under local positivity the excess, their co-information of $0.123$~bits, is
a lower bound on what they share. For the same reason as before, none of
this shared information is also provided by $S_3$, which carries no
information about $T$. It therefore lies in the atom
$1{\cdot}2$ of panel~A: an observer who already knows $S_3$ obtains it from
$S_1$, and again from $S_2$. The blue region of panel~A, the contextual redundancy, thus holds at least
$0.123$~bits, while the blue region of panel~B is empty. Yet the source
chain rule requires the two to carry the same amount.

\subsection{The theorem}

\begin{theorem}[Incompatibility of the source chain rule and local positivity]
\label{thm:incompatibility}
For every $n\geq3$, no PID on $n$ sources satisfies both local positivity
and the source chain rule.
\end{theorem}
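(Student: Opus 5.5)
We argue by contradiction: suppose a PID satisfies local positivity and the source chain rule. We evaluate it on the distribution of \Cref{ex:witness}, with the context $\bfc=\{3\}$ and the instance \cref{eq:scr-three} of the source chain rule. For $n>3$ we add constant sources $S_4,\ldots,S_n$. Adding constant sources should not disturb the argument, since we only use consistency equations, sums of atoms, and the source chain rule with context $\{3\}$ and collections $\{1\},\{2\}$. Concretely, the argument below goes through verbatim with $\B_n$ in place of $\B_3$, provided the inequalities are phrased through sums of atoms rather than named three-source atoms.

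\textbf{Step 1: the conditional redundancy vanishes.} For each $z$, under $p(\cdot\mid S_3=z)$ one of $S_1,S_2$ is constant. We apply the two-source PID of $(S_1,S_2)$ under this conditional distribution; this is legitimate because the PID is defined for any number of sources. Say $S_2$ is constant. Then $I(S_2;T)=0$ and $I(S_2;T)=\Pi(1{\cdot}2)+\Pi(2)$, so local positivity forces $\Pi(1{\cdot}2)=0$, i.e.\ $\Icap^{p(\cdot\mid z)}(1{\cdot}2;T)=0$. The symmetric case is identical. Averaging gives $\Icap(1{\cdot}2;T\mid S_3)=0$. Since $I(S_3;T)=0$, \cref{eq:scr-three} yields $\Icap(13{\cdot}23;T)=0$.

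\textbf{Step 2: a positive lower bound on $\Icap(13{\cdot}23;T)$.} This sum runs over all atoms $f$ with $f(\{1,3\})=f(\{2,3\})=1$. It contains every atom with $f(\{1\})=f(\{2\})=1$, so by local positivity
\[
\Icap(13{\cdot}23;T)\ \ge\ \Icap(1{\cdot}2;T),
\]
where the right-hand side is the full-PID redundancy, i.e.\ the sum of atoms accessible from both $S_1$ and $S_2$. By inclusion--exclusion of atom sums,
\[
\Icap(1{\cdot}2;T)=I(S_1;T)+I(S_2;T)-\sum_{f(\{1\})=1\text{ or } f(\{2\})=1}\Pi(f;T).
\]
Every atom in the last sum satisfies $f(\{1,2\})=1$, so with nonnegativity that sum is at most $I(S_1,S_2;T)$. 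Hence $\Icap(1{\cdot}2;T)$ is at least the co-information \cref{eq:coinformation}, which is about $0.123>0$. This contradicts Step 1.

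\textbf{Main obstacle.} The delicate point is Step 1's use of the two-source PID under the conditional distributions. This is exactly what the conditional redundancy $\Icap(1{\cdot}2;T\mid S_3)$ in Definition~\ref{def:scr} refers to, namely the decomposition of the remaining sources. Local positivity must therefore be invoked for the two-source PID, which Definition~\ref{def:lp} permits since it holds for every number of sources. A second check is the computation of the co-information value in \Cref{ex:witness}. Beyond that, the $n>3$ padding should only require noting that constant sources change no mutual information and no term used.
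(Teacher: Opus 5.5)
Your proposal is correct and follows essentially the same route as the paper's proof. Both use the same counterexample, use local positivity in the conditional PIDs to make $\Icap(1{\cdot}2;T\mid S_3)$ vanish, and use the chain $\Icap(13{\cdot}23;T)\geq\Icap(1{\cdot}2;T)\geq I(S_1;T)+I(S_2;T)-I(S_1,S_2;T)$. Your inclusion--exclusion bound is only a rearrangement of the paper's identity, which writes the co-information as the redundancy minus the nonnegative sum of atoms accessible from $(S_1,S_2)$ jointly but from neither source alone.
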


\begin{proof}
Suppose that a PID satisfies both properties, and consider the distribution of
\Cref{ex:witness}, together with constant sources $S_4,\ldots,S_n$ if
$n>3$; these enter none of the quantities below.

By the source chain rule \cref{eq:scr-three} with context $S_3$,
\begin{equation}
\begin{aligned}
\Icap(13{\cdot}23;T)
&=I(S_3;T)+\tfrac12\,\Icap^{p(\cdot\mid S_3=0)}(1{\cdot}2;T)\\
&\quad+\tfrac12\,\Icap^{p(\cdot\mid S_3=1)}(1{\cdot}2;T).
\end{aligned}
\end{equation}
The first term vanishes because $S_3=C$ is independent of $T$. Under
$p(\cdot\mid S_3=0)$, the source $S_2=\min(0,T)$ is constant, so its
mutual information with $T$ vanishes. In the PID of $(S_1,S_2)$ under this
distribution, this mutual information is the sum of the redundancy
$\Icap^{p(\cdot\mid S_3=0)}(1{\cdot}2;T)=\Pi^{p(\cdot\mid S_3=0)}(1{\cdot}2;T)$
and the unique information $\Pi^{p(\cdot\mid S_3=0)}(2;T)$. Both are
nonnegative by local positivity and add up to zero, so the redundancy
vanishes. The same argument
applies under $p(\cdot\mid S_3=1)$, where $S_1=\max(1,T)$ is constant. Hence
$\Icap(13{\cdot}23;T)=0$.

On the other hand, local positivity gives
\begin{equation}
\begin{aligned}
\Icap(13{\cdot}23;T)&\geq\Icap(1{\cdot}2;T)\\
&\geq I(S_1;T)+I(S_2;T)-I(S_1,S_2;T)\approx0.123>0.
\end{aligned}
\end{equation}
The first inequality holds because every atom accessible from both $S_1$ and
$S_2$ is accessible from both enlarged collections. For the second, the consistency equations give
\begin{equation*}
I(S_1;T)+I(S_2;T)-I(S_1,S_2;T)
=\sum_{f(\{1\})=f(\{2\})=1}\Pi(f;T)
\;-\sum_{\substack{f(\{1,2\})=1\\ f(\{1\})=f(\{2\})=0}}\Pi(f;T).
\end{equation*}
The first sum is the redundancy $\Icap(1{\cdot}2;T)$. The second collects
the atoms accessible from $(S_1,S_2)$ but from neither source alone. Under
local positivity, it is nonnegative, so the co-information is at most the
redundancy. This contradiction proves the theorem.
\end{proof}

Neither requirement is inconsistent on its own: $\Imin$ satisfies local
positivity \citep{williams2010nonnegative}, and \Cref{sec:measures} presents
measures that satisfy the source chain rule. For $n\leq2$, both hold
together: every PID satisfies the source chain rule there
(\Cref{rem:trivial}), $\Imin$ included.

\section{Measures that satisfy the source chain rule}
\label{sec:measures}

\Cref{thm:incompatibility} shows that the source chain rule has a price:
negative atoms. Is the rule satisfiable at all? It is, by established
measures of quite different origin. We first treat two measures that are
defined pointwise, then the common-information redundancy, and finally show
how each of the measures fares on the counterexample of
\Cref{sec:incompatibility}.

\subsection{Pointwise measures}

Several measures for discrete variables are averages of pointwise
quantities. For a realization $(s,t)$ of $(\bfS,T)$ with positive
probability, write $s_{\bfa}$ for the values of the sources in $\bfa$. A
pointwise measure assigns a \emph{pointwise redundancy}
$r(s_{\bfa_1},\ldots,s_{\bfa_m};t)$ to the source realizations
$s_{\bfa_1},\ldots,s_{\bfa_m}$ and the target realization $t$, and
\begin{equation*}
\Icap(\bfa_1,\ldots,\bfa_m;T):=\sum_{s,t}p(s,t)\,r(s_{\bfa_1},\ldots,s_{\bfa_m};t).
\end{equation*}
The pointwise redundancy depends on the joint distribution $p$; evaluated
under another distribution $q$, we write it as $r^{q}$. Two prominent
measures are of this kind. 
\begin{itemize}[leftmargin=1.6em,itemsep=0.35em]
  \item The \emph{shared-exclusion redundancy}
        $\Isx$~\citep{makkeh2021introducing} is the pointwise information
        provided by the event that at least one of the collections takes
        its observed value:
        $r(s_{\bfa_1},\ldots,s_{\bfa_m};t)=\log p\bigl(t\mid S_{\bfa_1}{=}s_{\bfa_1}\vee\cdots\vee S_{\bfa_m}{=}s_{\bfa_m}\bigr)/p(t)$.
  \item The \emph{specificity--ambiguity measure}
        $\Rmin$~\citep{finn2018pointwise} is
        $r(s_{\bfa_1},\ldots,s_{\bfa_m};t)=\min_i h(s_{\bfa_i})-\min_i h(s_{\bfa_i}\mid t)$,
        where $h(s_{\bfa}):=-\log p(s_{\bfa})$ and
        $h(s_{\bfa}\mid t):=-\log p(s_{\bfa}\mid t)$.
\end{itemize}
Both satisfy the source chain rule in a stronger, pointwise form: the
identity holds for every single realization, not only on average, just as
the chain rule of mutual information holds for pointwise mutual
information.

\begin{proposition}[Pointwise source chain rule]
\label{prop:pointwise-scr}
Let $r$ be the pointwise redundancy of $\Isx$ or of $\Rmin$. For every
context $\bfc$, all nonempty $\bfa_1,\ldots,\bfa_m\subseteq[n]\setminus\bfc$,
and every realization $(s,t)$ with positive probability,
\begin{equation}
r(s_{\bfa_1\cup\bfc},\ldots,s_{\bfa_m\cup\bfc};t)=\log\frac{p(t\mid s_{\bfc})}{p(t)}+r^{p(\cdot\mid s_{\bfc})}(s_{\bfa_1},\ldots,s_{\bfa_m};t),
\label{eq:pointwise-scr}
\end{equation}
where $p(\cdot\mid s_{\bfc})$ is the conditional distribution given
$S_{\bfc}=s_{\bfc}$. Averaging over $(s,t)$ yields the source chain rule, so
$\Isx$ and $\Rmin$ satisfy it.
\end{proposition}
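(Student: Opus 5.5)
The plan is to verify \cref{eq:pointwise-scr} separately for each measure by direct manipulation of the defining formulas, and then average. Throughout, fix a realization $(s,t)$ with $p(s,t)>0$. Write $q:=p(\cdot\mid s_{\bfc})$ for the conditional distribution given $S_{\bfc}=s_{\bfc}$; this realization has positive $q$-probability, so $r^q$ is defined there. The single identity we will use repeatedly is that the event $\{S_{\bfa_i\cup\bfc}=s_{\bfa_i\cup\bfc}\}$ equals $\{S_{\bfc}=s_{\bfc}\}\cap\{S_{\bfa_i}=s_{\bfa_i}\}$.

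For $\Isx$, let $E_i:=\{S_{\bfa_i}=s_{\bfa_i}\}$ and $C:=\{S_{\bfc}=s_{\bfc}\}$. Distributivity gives $\bigcup_i(C\cap E_i)=C\cap\bigcup_iE_i$. Therefore
\[
p\Bigl(t\Bigm|\textstyle\bigcup_i(C\cap E_i)\Bigr)=q\Bigl(t\Bigm|\textstyle\bigcup_iE_i\Bigr).
\]
The left-hand side of \cref{eq:pointwise-scr} is then
\[
\log\frac{q(t\mid\bigcup_iE_i)}{p(t)}=\log\frac{p(t\mid s_{\bfc})}{p(t)}+\log\frac{q(t\mid\bigcup_iE_i)}{q(t)},
\]
using $q(t)=p(t\mid s_{\bfc})$. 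The last term is exactly $r^q(s_{\bfa_1},\ldots,s_{\bfa_m};t)$. Positivity of every conditioning event follows from $p(s,t)>0$.

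For $\Rmin$, use $h(s_{\bfa_i\cup\bfc})=h(s_{\bfc})+h^q(s_{\bfa_i})$, where $h^q(s_{\bfa})=-\log q(s_{\bfa})$. Likewise, $h(s_{\bfa_i\cup\bfc}\mid t)=h(s_{\bfc}\mid t)+h^{q}(s_{\bfa_i}\mid t)$. The latter requires $p(s_{\bfa_i}\mid s_{\bfc},t)=q(s_{\bfa_i}\mid t)$, which holds because conditioning $q$ on $t$ is conditioning $p$ on $(s_{\bfc},t)$. Since the added terms do not depend on $i$, they pull out of each minimum. The left-hand side thus becomes
\[
h(s_{\bfc})-h(s_{\bfc}\mid t)+r^q(s_{\bfa_1},\ldots,s_{\bfa_m};t).
\]
Bayes' rule gives $h(s_{\bfc})-h(s_{\bfc}\mid t)=\log p(s_{\bfc}\mid t)/p(s_{\bfc})=\log p(t\mid s_{\bfc})/p(t)$, so the identity holds.

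To average, multiply \cref{eq:pointwise-scr} by $p(s,t)$ and sum. The left-hand side becomes $\Icap(\bfa_1\cup\bfc,\ldots;T)$, and the first right-hand term becomes $I(\bfc;T)$. For the last term, write $p(s,t)=p(s_{\bfc})\,q(s_{[n]\setminus\bfc},t)$ and sum over the remaining variables first. This gives $\sum_zp(z)\,\Icap^{p(\cdot\mid S_{\bfc}=z)}(\bfa_1,\ldots,\bfa_m;T)=\Icap(\bfa_1,\ldots,\bfa_m;T\mid\bfc)$ by \cref{eq:conditional}. This is \cref{eq:scr}.

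The main obstacle is bookkeeping rather than ideas. We must make sure that the pointwise redundancy of the remaining sources, evaluated under $q$, is the one that induces the conditional decomposition. In particular, $r^q$ must use $q(t)$ and $q$-probabilities of events in the remaining sources only, so that it depends only on $s_{\bfa_i}$ and not on the now-constant context. We must also confirm that no conditioning event has probability zero.
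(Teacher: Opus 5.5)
Your proposal is correct and follows the paper's proof essentially step for step: for $\Isx$ you use the intersection identity $\bigcup_i(C\cap E_i)=C\cap\bigcup_iE_i$ and split the logarithm; for $\Rmin$ you use the surprisal chain rule, pull the context terms out of both minima and apply Bayes' rule; and you average by grouping realizations by $s_{\bfc}$. Your extra checks, that every conditioning event has positive probability and that $r^{q}$ depends only on the remaining sources, are details the paper leaves implicit.
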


\begin{proof}
For $\Isx$, let $U$ be the event that $S_{\bfa_i}=s_{\bfa_i}$ for at least
one $i$. The event that $S_{\bfa_i\cup\bfc}=s_{\bfa_i\cup\bfc}$ for at least
one $i$ is the intersection of $U$ with $\{S_{\bfc}=s_{\bfc}\}$, and
\begin{equation*}
\log\frac{p(t\mid U,s_{\bfc})}{p(t)}=\log\frac{p(t\mid s_{\bfc})}{p(t)}+\log\frac{p(t\mid U,s_{\bfc})}{p(t\mid s_{\bfc})},
\end{equation*}
where the last term is the shared-exclusion redundancy under
$p(\cdot\mid s_{\bfc})$. For $\Rmin$,
$h(s_{\bfa_i\cup\bfc})=h(s_{\bfc})+h(s_{\bfa_i}\mid s_{\bfc})$ and
$h(s_{\bfa_i\cup\bfc}\mid t)=h(s_{\bfc}\mid t)+h(s_{\bfa_i}\mid s_{\bfc},t)$
for every $i$, so the context terms can be taken out of both minima. Their
difference $h(s_{\bfc})-h(s_{\bfc}\mid t)$ is $\log p(t\mid s_{\bfc})/p(t)$,
and what remains is the specificity--ambiguity redundancy under
$p(\cdot\mid s_{\bfc})$. In both cases, averaging \cref{eq:pointwise-scr}
over $(s,t)$ yields $I(\bfc;T)$ for the first term and, after grouping the
realizations by $s_{\bfc}$, $\Icap(\bfa_1,\ldots,\bfa_m;T\mid\bfc)$ for the
second.
\end{proof}

Both measures also satisfy the
target chain rule \citep{makkeh2021introducing,finn2018pointwise}. Each is
the logarithm of a ratio between a quantity computed under the posterior
$p(\cdot\mid t)$ and the same quantity under the prior $p$, and for a
composite target $t=(t_1,t_2)$ such a logarithm telescopes into a term for
$t_1$ and a term for $t_2$ under $p(\cdot\mid t_1)$.

\subsection{Common-information redundancy}

A quite different notion of redundancy regards as shared only what is
deterministically common to the sources. A random variable is a
\emph{common part} of $X_1,\ldots,X_m$ if it is almost surely a function of
each $X_i$. There is a maximal common part $K(X_1,\ldots,X_m)$, of which
every common part is a function, unique up to relabeling
\citep{gacs1973common,Griffith2014_common_randomness},
and Griffith et al.~\citep{Griffith2014_common_randomness}
defined
$\Iwedge(\bfa_1,\ldots,\bfa_m;T):=I\bigl(K(S_{\bfa_1},\ldots,S_{\bfa_m});T\bigr)$.
For collections that all contain the context, the maximal common part
consists of the context itself together with whatever is common to the
collections within each context. This is the
key step in the proof of the following proposition.

\begin{proposition}
\label{prop:wedge-scr}
$\Iwedge$ satisfies the source chain rule.
\end{proposition}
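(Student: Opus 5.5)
The plan is to prove the identity by describing the maximal common part of enlarged collections explicitly, and then to apply the chain rule of mutual information. Fix a context $\bfc$ and nonempty $\bfa_1,\ldots,\bfa_m\subseteq[n]\setminus\bfc$. Write $X_i:=S_{\bfa_i}$ and $Z:=S_{\bfc}$. For each value $z$ with $p(z)>0$, let $K_z:=K^{p(\cdot\mid Z=z)}(X_1,\ldots,X_m)$ be the maximal common part under the conditional distribution. Fix a labeling so that $K_z=\kappa_z(X_i)$ almost surely given $Z=z$, for every $i$. Define the random variable $K^\ast:=(Z,\kappa_Z(X_1))$. The key claim is
\[
K(S_{\bfa_1\cup\bfc},\ldots,S_{\bfa_m\cup\bfc})=K^\ast
\]
up to relabeling. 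Granting it, the chain rule gives
\[
I(K^\ast;T)=I(Z;T)+I(\kappa_Z(X_1);T\mid Z)=I(\bfc;T)+\sum_z p(z)\,I^{p(\cdot\mid z)}(K_z;T),
\]
and the last sum is exactly $\Iwedge(\bfa_1,\ldots,\bfa_m;T\mid\bfc)$. This is the source chain rule \cref{eq:scr}.

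For the key claim I will show two things. First, $K^\ast$ is a common part of the enlarged collections. It is almost surely a function of $(X_i,Z)$ for each $i$, because $\kappa_Z(X_1)=\kappa_Z(X_i)$ almost surely, which follows by conditioning on each $z$ separately. Second, $K^\ast$ is maximal. Let $W$ be any common part, so $W=\phi_i(X_i,Z)$ almost surely for each $i$. Under $p(\cdot\mid Z=z)$, the maps $\phi_i(\cdot,z)$ exhibit $W$ as a common part of $X_1,\ldots,X_m$. By maximality of $K_z$ there is a map $\psi_z$ with $W=\psi_z(K_z)$ almost surely given $Z=z$. Setting $\psi(z,k):=\psi_z(k)$, we get $W=\psi(K^\ast)$ almost surely. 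Every common part is therefore a function of $K^\ast$. By uniqueness of the maximal common part up to relabeling, $K^\ast$ is the maximal common part, and in particular $I(K^\ast;T)$ is well defined.

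I expect the main obstacle to be the measure-theoretic bookkeeping rather than any conceptual difficulty. The object $K$ is only defined up to relabeling and up to null sets, and "function almost surely" must be interpreted under each conditional law. I will work with discrete variables, as the paper does for the other measures. There I can use the standard construction of $K$ as the connected component of the bipartite-type graph that links values of the $X_i$ co-occurring with positive probability. With that construction, the step "condition on $z$, take components, then reattach $z$" becomes literally the statement that the components of the graph for the enlarged collections are the components for each fixed $z$, tagged by $z$. This holds because any edge between values of the enlarged collections joins two values that share the same $z$. This combinatorial view offers an alternative, transparent proof of the key claim, and I would present whichever of the two arguments is shorter.
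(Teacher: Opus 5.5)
Your proposal is correct and follows the paper's proof essentially step by step. The paper also identifies the maximal common part of the enlarged collections as $(S_{\bfc},K_{S_{\bfc}})$, verifies that it is a common part and that every common part is a function of it by invoking the maximality of $K_z$ on each event $\{S_{\bfc}=z\}$, and concludes with the chain rule of mutual information; your connected-component remark for discrete variables is a valid alternative check of the key claim but is not needed.
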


\begin{proof}
Let $K_z$ be the maximal common part of $S_{\bfa_1},\ldots,S_{\bfa_m}$
under $p(\cdot\mid S_{\bfc}=z)$. We show that $(S_{\bfc},K_{S_{\bfc}})$ is the
maximal common part of the enlarged collections
$(S_{\bfa_1},S_{\bfc}),\ldots,(S_{\bfa_m},S_{\bfc})$. First, it is a common
part: $S_{\bfc}$ is a function of each enlarged collection, and given
$S_{\bfc}=z$, $K_z$ is a function of each $S_{\bfa_i}$. Second, every common
part $W$ of the enlarged collections is a function of it. Since the pair
contains $S_{\bfc}$, it suffices to show that, for each value $z$, $W$ is a
function of $K_z$ on the event $\{S_{\bfc}=z\}$. On this event, $W$ is a
function of each $S_{\bfa_i}$, hence a common part of
$S_{\bfa_1},\ldots,S_{\bfa_m}$ under $p(\cdot\mid S_{\bfc}=z)$, and therefore,
by maximality of $K_z$, a function of $K_z$. Hence
$\Iwedge(\bfa_1\cup\bfc,\ldots,\bfa_m\cup\bfc;T)=I(S_{\bfc},K_{S_{\bfc}};T)$,
and the chain rule of mutual information gives
\begin{equation}
\begin{aligned}
I(S_{\bfc},K_{S_{\bfc}};T)&=I(\bfc;T)+\sum_zp(z)\,I^{p(\cdot\mid S_{\bfc}=z)}(K_z;T)\\
&=I(\bfc;T)+\Iwedge(\bfa_1,\ldots,\bfa_m;T\mid\bfc).
\end{aligned}\qedhere
\end{equation}
\end{proof}

\subsection{The measures on the counterexample}

All three measures assign no shared information to a constant source. For
a constant source $S_j$, the event $S_j=s_j$ is certain, so that the event
in the definition of $\Isx$ is certain, the maximum in $\Rmin$ equals one,
and the common part in $\Iwedge$ is constant. On the counterexample, each of them
must therefore break one of the two inequalities in the proof of
\Cref{thm:incompatibility}: that redundancy does not decrease when $S_3$ is
added to both sources, or that it is at least the co-information.
\Cref{fig:escape} shows which. $\Isx$ and $\Rmin$ assign positive
redundancy to $S_1$ and $S_2$, namely $\tfrac12\log\tfrac43\approx0.21$ and
$\log\tfrac43\approx0.42$~bits, but zero redundancy once the context is
added to both sources. Their shared information thus shrinks when both
sources are supplemented by the context. $\Iwedge$ assigns zero redundancy
throughout, since $S_1$ and $S_2$ have no nonconstant common part, which is
less than their co-information. It pays with negative synergy: the atoms
accessible from $(S_1,S_2)$ but from neither source alone sum to
$-0.123$~bits.
$\Imin$, finally, satisfies local positivity \citep{williams2010nonnegative}
and therefore violates the source chain rule. On the counterexample,
$\Imin(13{\cdot}23;T)=\tfrac12$~bit, the largest value compatible with
local positivity, whereas the source chain rule requires zero.

\begin{figure}[tbp]
\centering
\begingroup
\widefigure{%
\begin{tikzpicture}[x=1cm,y=1cm,font=\small]
\tikzset{
  head/.style={font=\footnotesize,text=muted,align=center,anchor=south},
  col/.style={text=ink},
  val/.style={text=ink},
  ok/.style={text=muted},
  bad/.style={text=SynC,font=\small\bfseries,fill=SynC!12,rounded corners=2pt,
              inner xsep=6pt,inner ysep=3pt},
  name/.style={text=ink,anchor=east},
  verdict/.style={font=\footnotesize,text=ink,anchor=west},
}
\def\xZ{2.2}\def\xA{3.3}\def\xB{4.9}\def\xC{6.5}\def\xD{8.1}\def\xE{9.7}\def\xF{11.3}
\def\xV{12.2}
\node[head] at (\xA,0.75) {source\\chain rule};
\node[head] at (\xC,0.75) {adding $S_3$ does not\\decrease redundancy};
\node[head] at (\xE,0.75) {redundancy at least\\the co-information};
\draw[rule,line width=0.4pt] (\xA,0.7) -- (\xA,0.45);
\draw[rule,line width=0.4pt] (\xC,0.7) -- (\xC,0.45);
\draw[rule,line width=0.4pt] (\xE,0.7) -- (\xE,0.45);
\node[col] at (\xZ,0) {$0$};
\node[col] at (\xA,0) {$=$};
\node[col] at (\xB,0) {$I_\cap(13{\cdot}23;T)$};
\node[col] at (\xC,0) {$\geq$};
\node[col] at (\xD,0) {$I_\cap(1{\cdot}2;T)$};
\node[col] at (\xE,0) {$\geq$};
\node[col] at (\xF,0) {$0.123$};
\draw[rule,line width=0.6pt] (1.0,-0.45) -- (16.6,-0.45);
\foreach \y/\name/\b/\a/\ra/\rb/\rc/\why in {
  -1.0/{$I_{\min}$}/0.50/0.21/bad/ok/ok/{violates the source chain rule},
  -1.7/{$I_\cap^{\mathrm{sx}}$}/0/0.21/ok/bad/ok/{redundancy decreases when $S_3$ is added},
  -2.4/{$R_{\min}$}/0/0.42/ok/bad/ok/{redundancy decreases when $S_3$ is added},
  -3.1/{$I_\cap^{\wedge}$}/0/0/ok/ok/bad/{redundancy below the co-information}}{
  \node[name] at (1.6,\y) {\name};
  \node[val] at (\xZ,\y) {$0$};
  \node[val] at (\xB,\y) {$\b$};
  \node[val] at (\xD,\y) {$\a$};
  \node[val] at (\xF,\y) {$0.123$};
  \node[verdict] at (\xV,\y) {\why};
}
\node[bad] at (\xA,-1.0) {$\neq$};
\node[ok]  at (\xC,-1.0) {$\geq$};
\node[ok]  at (\xE,-1.0) {$\geq$};
\node[ok]  at (\xA,-1.7) {$=$};
\node[bad] at (\xC,-1.7) {$<$};
\node[ok]  at (\xE,-1.7) {$\geq$};
\node[ok]  at (\xA,-2.4) {$=$};
\node[bad] at (\xC,-2.4) {$<$};
\node[ok]  at (\xE,-2.4) {$\geq$};
\node[ok]  at (\xA,-3.1) {$=$};
\node[ok]  at (\xC,-3.1) {$=$};
\node[bad] at (\xE,-3.1) {$<$};
\end{tikzpicture}%
}
\endgroup
\caption{The proof of \Cref{thm:incompatibility} as a chain of three
requirements, evaluated on the counterexample of \Cref{fig:witness} (values
in bits).
All four measures assign no shared information to a constant source, so
the source chain rule forces $I_\cap(13{\cdot}23;T)=0$. Local
positivity requires that adding $S_3$ to both sources does not decrease
their redundancy, and that the redundancy of $S_1$ and $S_2$ is at least
their co-information $I(S_1;T)+I(S_2;T)-I(S_1,S_2;T)\approx0.123$. The three
requirements then contradict each other, and every measure violates at least one;
in each row, the violated requirement is highlighted. $\Imin$ violates the
source chain rule. For $\Isx$ and $\Rmin$, the redundancy decreases when
$S_3$ is added. $\Iwedge$ assigns less redundancy than the co-information, at
the price of negative synergy.}
\label{fig:escape}
\end{figure}

\FloatBarrier
\section{Relation to the target chain rule}
\label{sec:tcr}

The target chain rule
\citep{Bertschinger2013_shared_info,finn2018pointwise,makkeh2021introducing,matthias2025novel}
conditions on part of a composite target $T=(T_1,T_2)$ instead of part of
the sources:
\begin{equation}
\Icap(\bfa_1,\ldots,\bfa_m;T_1,T_2)=\Icap(\bfa_1,\ldots,\bfa_m;T_1)+\Icap(\bfa_1,\ldots,\bfa_m;T_2\mid T_1).
\label{eq:tcr}
\end{equation}
Both rules have atom-level formulations. The target chain rule is
equivalent to a chain rule for each atom \citep{matthias2025novel}, and the
source chain rule to the statement that conditioning on sources regroups the
atoms according to their access given the context (\Cref{thm:scr-atoms}).
The pointwise measures satisfy both rules, but the common-information
redundancy shows that the source chain rule does not entail the target chain
rule. For the converse, Matthias et al.\ construct, from any locally positive
base PID, a \emph{chained PID} that satisfies the target chain rule by
construction: it applies the base PID to each component of the target under
the conditional distribution given the preceding components, and it remains
locally positive \citep[Appendix]{matthias2025novel}.

\begin{proposition}
\label{prop:independence}
The source chain rule and the target chain rule are logically independent:
$\Iwedge$ satisfies the former but not the latter, and every chained PID
built from a locally positive base PID satisfies the latter but, for
$n\geq3$, not the former.
\end{proposition}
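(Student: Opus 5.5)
The proposition makes four claims, and the plan is to prove them separately. Two are already in hand. $\Iwedge$ satisfies the source chain rule by \Cref{prop:wedge-scr}. Every chained PID satisfies the target chain rule by construction, as recorded in \citep[Appendix]{matthias2025novel}. What remains is to show that $\Iwedge$ violates the target chain rule, and that a chained PID built from a locally positive base violates the source chain rule when $n\geq3$.

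For the chained PIDs the plan is to invoke \Cref{thm:incompatibility}. A chained PID built from a locally positive base PID is again locally positive \citep[Appendix]{matthias2025novel}. \Cref{thm:incompatibility} shows that no locally positive PID on $n\geq3$ sources satisfies the source chain rule, so no such chained PID does either. Two details need checking. First, the chained PID must be defined for every number of sources and every distribution, since local positivity in \Cref{def:lp} quantifies over all of these; this holds because the chaining applies the base PID on the same source set. Second, on a non-composite target the chained PID coincides with its base PID. So on the distribution of \Cref{ex:witness}, where the target $T$ is a single bit, one can argue directly: the base PID is locally positive, hence violates the source chain rule there, and so the chained PID violates it too.

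For $\Iwedge$ the plan is to exhibit an explicit distribution on which \cref{eq:tcr} fails. The first candidate is two sources $S_1=T_1$ and $S_2=T_2$, with $T_1,T_2$ independent uniform bits. The left-hand side of \cref{eq:tcr} is zero, because $S_1$ and $S_2$ are independent and have no nonconstant common part. On the right-hand side, $\Icap(1{\cdot}2;T_1)=0$ for the same reason. The conditional term uses the conditional distribution given $T_1$, under which $S_1$ is constant, so the common part is constant again and the term vanishes. This candidate therefore gives $0=0$ and fails.

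I will instead use a target component that creates shared structure. Let $X$ be a uniform bit and set $S_1=S_2=X$, with target $T=(T_1,T_2)=(X,X)$. Then the left-hand side is $I(X;X,X)=1$, and the right-hand side is $I(X;X)+0=1$, so this also holds. The problem is that the common part $K$ is computed without reference to the target and is unaffected by conditioning on the target. Conditioning on $T_1$, however, changes the conditional distribution of the sources, and this can create a common part. So I will look for sources that have no common part unconditionally but acquire one given $T_1$.

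Take $S_1=(A,T_1)$ and $S_2=(B,T_1)$ with the roles chosen so that...

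A simpler construction works. Let $T_1$ be a uniform bit, $A$ a uniform bit independent of $T_1$, and set $S_1=A$, $S_2=A\oplus T_1$, $T_2=A$.
\begin{itemize}
\item \emph{Left-hand side.} The common part of $S_1$ and $S_2$ is constant, because their joint distribution is uniform on four points with a connected bipartite support. So $\Iwedge(1{\cdot}2;T_1,T_2)=0$.
\item \emph{First term.} $\Iwedge(1{\cdot}2;T_1)=0$ for the same reason.
\item \emph{Conditional term.} Given $T_1=t$, we have $S_2=S_1\oplus t$, so the common part is $A$. Since $I(A;T_2\mid T_1=t)=1$, this term equals $1$.
\end{itemize}
This gives $0\neq0+1$, so $\Iwedge$ violates the target chain rule. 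The expected main obstacle is verifying the common-part computations carefully, in particular that the support graph is connected in the unconditional case. That check is elementary.
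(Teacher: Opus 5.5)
Your proof is correct and matches the paper's. Your distribution ($S_1=A$, $S_2=A\oplus T_1$, $T_2=A$) is exactly the paper's example of independent fair bits $S_1,S_2$ with $T_1=S_1\oplus S_2$ and $T_2=S_1$, giving $0\neq 0+1$, and your chained-PID argument is the same appeal to local positivity of chained PIDs together with \Cref{thm:incompatibility}; the abandoned first candidates can simply be dropped from the write-up.
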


\begin{proof}
Let $S_1$ and $S_2$ be independent fair bits, $T_1=S_1\oplus S_2$ and
$T_2=S_1$. Since the sources are independent, their maximal common part is
constant, so $\Iwedge(1{\cdot}2;T_1,T_2)=\Iwedge(1{\cdot}2;T_1)=0$. Given
$T_1$, however, each source determines the other, so their maximal common
part is $S_1=T_2$ itself, and $\Iwedge(1{\cdot}2;T_2\mid T_1)=H(T_2\mid
T_1)=1$~bit. A chained PID built from a locally positive base PID
satisfies local positivity and the target chain rule
\citep{matthias2025novel}, so by \Cref{thm:incompatibility} it violates the
source chain rule for $n\geq3$.
\end{proof}

The two rules are nevertheless closely related: under three elementary
properties, the target chain rule implies the source chain rule.

\begin{proposition}
\label{prop:tcr-implies-scr}
Consider the following three properties of a redundancy measure:
\begin{enumerate}[label=\textnormal{(\roman*)},leftmargin=2.4em,itemsep=0.15em]
  \item redundancy does not change when constant sources are added to the
        collections;
  \item redundancy about a composite target does not change when its two
        components are swapped;
  \item collections that each determine the target share all of its
        information: $\Icap(\bfb_1,\ldots,\bfb_m;T)=H(T)$ whenever $T$ is a
        function of each $S_{\bfb_i}$.
\end{enumerate}
A redundancy measure that has properties (i)--(iii) and satisfies the target
chain rule also satisfies the source chain rule.
\end{proposition}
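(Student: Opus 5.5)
The plan is to compute the redundancy of the enlarged collections $\bfa_1\cup\bfc,\ldots,\bfa_m\cup\bfc$ about the composite target $(S_{\bfc},T)$ in two ways, expanding it with the target chain rule \cref{eq:tcr} once with $S_{\bfc}$ as the first component and once with $T$ as the first component. Property (ii) makes the two expansions equal. Property (iii) evaluates the terms in which the target component is determined by every collection. Property (i) strips the context from the collections once it has been conditioned on and is therefore constant. I take all variables discrete, so that the entropies involved are finite.

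\emph{First expansion.} Apply \cref{eq:tcr} with $T_1=S_{\bfc}$ and $T_2=T$:
\begin{equation*}
\Icap(\bfa_1\cup\bfc,\ldots,\bfa_m\cup\bfc;S_{\bfc},T)=\Icap(\bfa_1\cup\bfc,\ldots;S_{\bfc})+\Icap(\bfa_1\cup\bfc,\ldots;T\mid S_{\bfc}).
\end{equation*}
\begin{itemize}
\item Each $S_{\bfa_i\cup\bfc}$ determines $S_{\bfc}$, so (iii) gives $H(S_{\bfc})$ for the first term.
\item For the second term, use \cref{eq:conditional}: it is the average over $z$ of the redundancy under $p(\cdot\mid S_{\bfc}=z)$. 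Under that distribution the sources in $\bfc$ are constant, so (i) lets me delete them from every collection. This leaves $\Icap^{p(\cdot\mid S_{\bfc}=z)}(\bfa_1,\ldots,\bfa_m;T)$, and averaging gives $\Icap(\bfa_1,\ldots,\bfa_m;T\mid\bfc)$.
\end{itemize}

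\emph{Second expansion.} First use (ii) to rewrite the left-hand side as the redundancy about $(T,S_{\bfc})$. Then apply \cref{eq:tcr} with $T_1=T$ and $T_2=S_{\bfc}$:
\begin{equation*}
\Icap(\bfa_1\cup\bfc,\ldots;T,S_{\bfc})=\Icap(\bfa_1\cup\bfc,\ldots;T)+\Icap(\bfa_1\cup\bfc,\ldots;S_{\bfc}\mid T).
\end{equation*}
Under each conditional distribution $p(\cdot\mid T=t)$, every $S_{\bfa_i\cup\bfc}$ still determines $S_{\bfc}$. So (iii), applied under that distribution, gives $H(S_{\bfc}\mid T=t)$, and averaging gives $H(S_{\bfc}\mid T)$ for the last term.

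\emph{Conclusion.} Equating the two expansions gives
\begin{equation*}
\Icap(\bfa_1\cup\bfc,\ldots,\bfa_m\cup\bfc;T)=H(S_{\bfc})-H(S_{\bfc}\mid T)+\Icap(\bfa_1,\ldots,\bfa_m;T\mid\bfc).
\end{equation*}
Since $H(S_{\bfc})-H(S_{\bfc}\mid T)=I(\bfc;T)$, this is exactly \cref{eq:scr}. The case $\bfc=\emptyset$ is trivial, because both sides then coincide.

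\emph{Main obstacle.} The step needing the most care is the use of (i) in the first expansion. The conditional decomposition evaluates the redundancy of the remaining sources, so it is formally a measure on fewer sources than the one on the left-hand side. I must argue that this is the same as the $n$-source redundancy with the constant context sources deleted from the collections. That identification relies on the standing assumption that the measure is defined for any number of sources and can be applied to a subfamily, together with (i) read as invariance under adding or removing constant sources. A secondary check is that (iii) and the target chain rule may legitimately be applied under conditional distributions and to a target, $S_{\bfc}$, that is itself built from sources. Both are allowed, since the properties are required for every joint distribution.
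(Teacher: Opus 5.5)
Your proposal is correct and follows the paper's proof essentially step by step. Like the paper, you expand the redundancy of the enlarged collections about $(S_{\bfc},T)$ by the target chain rule in both orders, evaluate the terms with target $S_{\bfc}$ by (iii) (also under each $p(\cdot\mid T=t)$), remove the constant context sources by (i), and equate the two expansions via (ii). Your remark on identifying the redundancy with the context sources deleted with the redundancy of the remaining sources is a point the paper leaves implicit in its standing assumption that the measure can be applied to a subfamily of the sources.
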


\begin{proof}
Let $\bfc\subseteq[n]$, let $\bfa_1,\ldots,\bfa_m\subseteq[n]\setminus\bfc$
be nonempty, and let $\bfb_i:=\bfa_i\cup\bfc$. We abbreviate
$\Icap(\bfb_1,\ldots,\bfb_m;\cdot)$ by $\Icap(\bfb;\cdot)$ and
$\Icap(\bfa_1,\ldots,\bfa_m;\cdot)$ by $\Icap(\bfa;\cdot)$. Expanding the
redundancy about the composite target $(S_{\bfc},T)$ by the target chain
rule in both orders gives
\begin{align*}
\Icap(\bfb;S_{\bfc},T)&=\Icap(\bfb;S_{\bfc})+\Icap(\bfb;T\mid S_{\bfc})=H(S_{\bfc})+\Icap(\bfa;T\mid\bfc),\\
\Icap(\bfb;T,S_{\bfc})&=\Icap(\bfb;T)+\Icap(\bfb;S_{\bfc}\mid T)=\Icap(\bfb;T)+H(S_{\bfc}\mid T).
\end{align*}
The second equalities hold by property (iii) for the terms with target
$S_{\bfc}$, which each $S_{\bfb_i}$ determines, also under each conditional
distribution given $T$, and by property (i) for $\Icap(\bfb;T\mid S_{\bfc})$,
since the context sources are constant under each conditional distribution
given $S_{\bfc}$. By property (ii), the left-hand sides agree, and
$H(S_{\bfc})-H(S_{\bfc}\mid T)=I(\bfc;T)$ yields the source chain rule
$\Icap(\bfb;T)=I(\bfc;T)+\Icap(\bfa;T\mid\bfc)$.
\end{proof}

The proposition explains in one stroke why $\Isx$ and $\Rmin$ satisfy the
source chain rule: both satisfy the target chain rule and, as is easily
checked, properties (i)--(iii). Under properties
(i)--(iii), the source chain rule is strictly weaker than the target chain
rule, since $\Iwedge$ has these properties and satisfies the source chain
rule, but not the target chain rule.

\Cref{tab:properties} summarizes how the measures relate to the two chain
rules and to local positivity.

\begin{table}[htb]
\centering
\renewcommand{\arraystretch}{1.18}
\begin{tabular}{@{}lccc@{}}
\toprule
& \pp{SCR} & \pp{LP} & \pp{TCR}\\
\midrule
$\Imin$~\citep{williams2010nonnegative} & no$^{a}$ & yes & no$^{b}$\\
$\Isx$~\citep{makkeh2021introducing} & yes & no & yes\\
$\Rmin$~\citep{finn2018pointwise} & yes & no & yes\\
$\Iwedge$~\citep{Griffith2014_common_randomness} & yes & no$^{c}$ & no$^{d}$\\
Chained PIDs~\citep{matthias2025novel} & no$^{a}$ & yes & yes\\
\bottomrule
\end{tabular}
\caption{The source chain rule (\pp{SCR}), local positivity (\pp{LP}) and the
target chain rule (\pp{TCR}) for $n\geq3$. Entries for \pp{SCR} follow from
\Cref{prop:pointwise-scr,prop:wedge-scr}, the others from the cited work,
except where marked: $^{a}$by \Cref{thm:incompatibility}, since the PID
satisfies \pp{LP}; $^{b}$see \citep{matthias2025novel}; $^{c}$already for two
sources \citep{liardi2026mathematical}; $^{d}$see \Cref{prop:independence}. The last row refers to every chained
PID built from a locally positive base PID, such as that of $\Imin$.}
\label{tab:properties}
\end{table}

\section{Discussion}
\label{sec:discussion}

\paragraph{The case for the source chain rule}
Conditional information has a standard construction: compute the quantity
under the conditional distributions given the context and average. At the
same time, by the very meaning of the atoms, the information that the
remaining sources provide beyond the context should be obtainable by piecing
together the atoms that become accessible once the context is known. Both
constructions are meant to capture the same thing, for instance the
information that two sources provide redundantly once a third is known, so
it seems reasonable to require that they agree. By \Cref{thm:scr-atoms}, this
requirement is the source chain rule. By \Cref{thm:incompatibility}, it
cannot be had together with local positivity. One of the two has to be given
up, and either choice leaves something to be explained.

\paragraph{Keeping the source chain rule}
A measure that satisfies the source chain rule has negative atoms, and these
need an interpretation. The counterexample shows what such an interpretation
has to account for. $\Isx$ and $\Rmin$ place the information shared by
$S_1$ and $S_2$ in the atom $1{\cdot}2{\cdot}3$, that is, they attribute it
to information that is also accessible from the context. As $I(S_3;T)=0$,
this is compensated within $I(S_3;T)$ by negative atoms
$\Pi(1{\cdot}3;T)=\Pi(2{\cdot}3;T)$, equal to $-\tfrac14$~bit for $\Isx$ and
$\tfrac14\log\tfrac38\approx-0.35$~bits for $\Rmin$. Nothing is then left to
be shared beyond the context, as the conditional decomposition requires.
But can such negative atoms be given a meaning? For pointwise measures, a natural
starting point is pointwise mutual information, which can also be negative:
an observation can lower the
probability of the actual target value, and such misinformation is a
familiar feature of pointwise information
\citep{finn2018probability,gutknecht2023information,matthias2025novel}. The
shared-exclusion redundancy is itself a pointwise mutual information, namely
that of the event that at least one source takes its observed value, so this
reading carries over to it \citep{makkeh2021introducing}. $\Rmin$ is not a
pointwise mutual information, but it is built from the specificity and the
ambiguity, the positive and negative components of pointwise information,
whose decompositions are nonnegative; its negative values arise when source
events redundantly provide misinformation about the target event
\citep{finn2018pointwise}. On the counterexample, $\Iwedge$ takes a different
route. It denies that
$S_1$ and $S_2$ share any information in the first place and pays with
negative synergy. We are not aware of an interpretation of this negativity.
As argued in \citep{matthias2025novel},
the burden of proof lies with any measure that admits negative atoms to show
that they can be interpreted.

\paragraph{Keeping local positivity}
A PID that keeps local positivity has two decompositions of the information
beyond a context, the contextual and the conditional one, and they cannot
agree for all distributions. Where they disagree, so do the contextual
and conditional redundancies, union informations and other quantities derived
from them
(\Cref{rem:other-quantities}). For $\Imin$ on the counterexample, for
instance, the contextual redundancy of $S_1$ and $S_2$ given $S_3$ is
$\tfrac12$~bit, whereas their conditional redundancy is zero: in no context
do the two sources carry information at the same time. Both decompositions
purport to describe the same thing, the information that the remaining
sources provide given that the context is already known. A PID that keeps local
positivity therefore owes an account of the difference: either only one of
them deserves this reading, and the other describes something else, or both
do, in two distinct senses that have to be told apart.

\paragraph{Chain rules and positivity}
Mutual information is nonnegative and satisfies a chain rule in each of
its arguments. For the atoms of a PID, both chain rules are at odds with local
positivity. For the source chain rule, this is \Cref{thm:incompatibility}. For
the target chain rule, it is the result of \citep{matthias2025novel}, which
needs in addition only that the PID not depend on the order in which the two
components of a composite target are listed; this holds automatically if the
target chain rule is required in both orders. The familiar behavior of mutual
information thus does not carry over to the atoms: a PID has to give up either
the nonnegativity of its atoms or the chain rules.

\section{Conclusion}
\label{sec:conclusion}

The atoms of a PID are defined by an access structure, and this structure predicts how
the decomposition should look once some of the sources are known. The source
chain rule is the requirement that this prediction agrees with what one
obtains by conditioning. It holds for the shared-exclusion,
specificity--ambiguity and common-information redundancies, and it is
incompatible with local positivity whenever there are at least three
sources. A PID can have nonnegative atoms, or atoms whose access structure correctly
predicts the effect of conditioning, but not both.

\section*{Use of AI tools}
Large language model assistants were used in preparing this paper, for
exploring arguments and counterexamples, numerically checking results,
drafting and editing text, and preparing figures. All results, proofs and
numerical values were checked by the author, who takes full responsibility
for the content.

\bibliography{references}

\end{document}